\documentclass[authoryear]{article}

\usepackage[left=2.25cm,right=2.25cm,top=1.0cm,bottom=2.0cm,papersize={182mm,257mm}]{geometry}

\usepackage{natbib}
\usepackage{graphicx}
\usepackage{color}
\usepackage{soul}
\usepackage{tikz}
\usepackage{url}
\usepackage{algorithm}
\usepackage{algpseudocode}
\usepackage{physics}
\usepackage{amssymb}
\usepackage{amsmath}
\usepackage{ulem}
\usetikzlibrary{patterns}
\usepackage{cases}
\usepackage{authblk}

\newtheorem{property}{Property}
\newtheorem{prop}{Proposition}
\newtheorem{thm}{Theorem}

\newtheorem{proof}{Proof}

\begin{document}
\title{Single machine rescheduling for new orders: properties and complexity results}
\date{}
\author[1]{Elena Rener} 
\author[1]{Fabio Salassa}
\author[1,2]{Vincent T'kindt}

\affil[1]{DIGEP, Politecnico di Torino, Torino, Italy}
\affil[2]{Laboratoire d'Informatique Fondamentale et Appliqu\'ee, Universit\'e de Tours, Tours, France}

\providecommand{\keywords}[1]{\textit{Keywords:} #1}

\maketitle 
\begin{abstract}
	Rescheduling problems arise in a variety of situations where a previously planned schedule needs to be adjusted to deal with unforeseen events. A common problem is the arrival of new orders, i.e. jobs, which have to be integrated into the schedule of the so-called old jobs. The maximum and total absolute time deviations of the completion times of these jobs are modeled as a disruption constraint to limit the change in the original schedule. Disruption constraints affect the shape of an optimal schedule, particularly with respect to the sequencing of old jobs and the insertion of idle time. We therefore give a classification into idle and no-idle problems for a set of single-machine rescheduling problems with different objective functions. We then prove the complexity of five rescheduling problems that have been left open in the literature. 
\end{abstract}
\keywords{
Rescheduling, Single machine, New orders, Maximum disruption, Total disruption
}

\section{Introduction}
In the literature, rescheduling has been considered as a way of modifying an existing schedule to respond to unforeseen disruptive events, such as the arrival of new jobs, the delay of some other jobs, machine breakdowns, or periods of unavailability of machines due to maintenance activities.
The aim of the present work is to study scheduling strategies when unexpected jobs, called \textit{new jobs}, have to be scheduled in an already given optimal solution made up of known jobs, called \textit{old jobs}. This is typically called rescheduling for new orders.
These strategies seek a trade-off between limiting the perturbation of the original schedule, called \textit{disruption} of the schedule, and integrating the new jobs to optimize an objective function. \\

Rescheduling problems for new orders have been the subject of a number of studies. 
This family of rescheduling problems was first formalized by \cite{HP:04} and has since received much attention. \cite{HP:04} studied several single machine scheduling problems. 
They considered total completion time or maximum lateness as objective functions to be minimized in combination with a disruption cost. They provided polynomial-time algorithms or show $\cal NP$-hardness for each of the problems tackled. The same problems were studied by \cite{TT14}, who considered the enumeration of Pareto optima. \cite{ZLY16} introduced a generalization of the disruption criterion by imposing on each old job $j$ a maximum amount of allowed disruption $k_j$. In this case, both the problems of minimizing the sum of the completion times and the maximum lateness become $\cal NP$-hard.
Only two research papers were found that consider rescheduling problems with multiple disruptions (\cite{HLP:07} and \cite{YMLL:07}). The setting here changes in the sense that there are multiple sets of new jobs arriving at different times. Thus, the initial state does not necessarily imply optimal schedules. In addition to multiple disruptions, \cite{YMLL:07} also considered jobs with release dates.
\cite{YM:07} studied four single machine rescheduling problems with minimization of the makespan in the presence of job release dates and different disruption criteria. Complexity results were given for the different scenarios.
\cite{BY07}, \cite{ZT10} and \cite{LZ15} considered the single machine rescheduling problem when jobs have time-dependent processing times.
\cite{guo_single-machine_2021} studied a rescheduling problem motivated by energy savings. New orders are rework jobs that need to be added to the original sequence to minimize the total waiting time of the jobs. \cite{liu_cost_2018} investigated the rescheduling on a two-machine flow shop with outsourcing solved by means of a hybrid variable neighbourhood search.
\cite{zhang_rescheduling_2021} took into consideration the minimization of the maximum weighted tardiness for rescheduling problems. \cite{rener_single_2022} studied the rescheduling problem for minimizing the maximum lateness with a constraint on the total time disruption and proposed the first exact algorithm to solve the problem when a no-idle constraint is also required. 
Finally, \cite{fang_rescheduling_2023} studied the rescheduling problem with rejection for minimizing a linear combination of the total weighted completion time, the maximum disruption and the rejection cost. They first proved the $\cal NP$-hardness of the problem, and then solved it with an exact dynamic programming algorithm and a fully polynomial time approximation algorithm. They showed the efficiency of the proposed algorithms with several computational results. 

We refer the readers to the surveys by \cite{VHL:03} and \cite{aytug_executing_2005} for a general introduction to rescheduling problems. More recent papers covering the topic from different perspectives are proposed for instance by \cite{pferschy_algorithms_2022}, \cite{luo_tardiness-augmented_2022}, \cite{wang_2018}.\\

In this paper, we consider rescheduling problems for new orders arising from the combination of two disruption criteria and several scheduling objective functions. Our contributions, which focus on single machine problems, can be summarized as follows:
\begin{itemize}
	\item we provide general properties for rescheduling with new orders to minimize classical scheduling objective functions, subject to a constraint on the maximum or total time disruption of the original jobs;
	\item we propose a classification of rescheduling problems according to whether machine idle times are to be inserted in optimal solutions;
	\item we provide proofs of the computational complexity of five open problems.
\end{itemize}

The remainder of the paper is organized as follows: in section 2, we formally define rescheduling problems and give contributions regarding the structure of optimal solutions. We present both properties for preserving the order of old jobs after rescheduling and properties on the requirement of inserting idle times. In section 3, we define the computational complexity of some open problems. 
Section 4 gives the conclusions and directions for future work.

\section{Structural properties}
\subsection{Notation}
Let $J^O$ (resp. $J^N$) be the set of $n_O$ old jobs (resp. $n_N$ new jobs) that must be processed on a single machine. Each job $j$ has a processing time $p_j$ and, depending on the problem, may have a due date $d_j$ or a weight $w_j$. Given any schedule $\sigma$, let $C_j(\sigma)$ ($C_j$ when there is no ambiguity) be the completion time of job $j$. Initially, jobs in $J^O$ are assumed to be optimally scheduled in a sequence $\pi^*$ to minimize a given objective function $f$. After the arrival of  jobs from $J^N$, the goal is to generate a new schedule $\sigma^*$ of all jobs to minimize $f$ while not disrupting too much $\pi^*$. The disruption of the schedule is modeled as the maximum or the sum of the absolute deviations of the completion times of old jobs. Let $\Delta_{max}=max_{j\in J^O}(\Delta_j)$ be the maximum disruption and $\overline \Delta = \sum_{j\in J^O}\Delta_j$ be the total disruption, where the disruption is given by $\Delta_j=|C_j(\sigma^*)-C_j(\pi^*)|, \forall j\in J^O$. A threshold $\epsilon$ limits the amount of disruption.  
Each job in a schedule is associated with a function $f_j$, which we assume to be regular, i.e. increasing with respect to the job completion time. 
We consider, for each problem, an objective function $f\in \{f_{max},\overline f\}$, where $f_{max}=\max_{j\in J^O \cup J^N}f_j$ and $\overline f=\sum_{j\in J^O \cup J^N}f_j$. 
Whenever interesting, we will consider the following particular cases of $\overline f$ and $f_{max}$. Among the $\overline f$ objectives, we consider three different objective functions. When considering the total weighted completion time $\overline{wC}$, for each job $f_j=w_jC_j$. For computing the total number of late jobs $\overline{U}$, $f_j=1$ if $C_j>d_j$ and $f_j=0$ otherwise. Finally, when the objective is the total tardiness $\overline{T}$, $f_j=\max(0,C_j-d_j)$.
As a $f_{max}$ objective, we consider the most widely used in the scheduling literature, which is the maximum lateness $L_{max}$, where $f_j=C_j-d_j$. 

Using the classical three-field notation of \cite{GLLR:79}, the problems we consider in this work are rescheduling problems on a single machine in the form of $1|\Delta_{max}\leq\epsilon|f$ and 
$1|\overline \Delta\leq \epsilon|f$. 

\subsection{Preserving the initial order for old jobs}
Old jobs are initially sequenced in a schedule $\pi^*$ to minimize an objective function $f$. This order might be preserved in an optimal schedule of the rescheduling problem.
In this section, we focus on establishing some properties of when this order is preserved, as opposed to when a \textit{re-sequencing} of old jobs is required.\\
The first two properties illustrated below serve a dual purpose. The first is to provide the source of the need for idle time insertion. The second is to provide actual structural properties for the complementary problems $1|f\leq \epsilon|\Delta_{max}$ and $1|f\leq \epsilon|\overline \Delta$. \\
The two disruption criteria lead to different behaviours. In fact, we first show that problem $1|f\leq \epsilon|\Delta_{max}$ never requires a re-sequencing to obtain an optimal solution, if any feasible schedule exist with old jobs ordered as in $\pi^*$. On the contrary, we show that problem $1|f\leq \epsilon|\overline \Delta$ may require a re-sequencing of old jobs, even if there exists a feasible schedule with old jobs ordered as in $\pi^*$.

\begin{property}
	\label{prop::deltamax}
	If there exists a non-empty set of feasible schedules for problem $1|f\leq \epsilon|\Delta_{max}$ with old jobs ordered as in $\pi^*$, there exists an optimal solution that belongs to this set of schedules.
\end{property} 

\begin{figure}[h]
	\centering
	\begin{tikzpicture}[x=0.4cm,y=0.4cm]
		\draw[xstep=1,ystep=1,ultra thin](0,0)(13,1);
		\draw[](-1.4,0.5)node{ $\pi^*$};
		\draw[](1.7,0.5)node{\small $i$};
		\draw[](4.5,0.5)node{\small $j$};
		\draw[](0.6,0.5)node{\small $\dots$};
		\draw[](3,0.5)node{\small $\dots$};
		\draw[](6,0.5)node{\small $\dots$};
		\draw[->](0,0)--(11,0)node[right]{\small $t$};
		\draw[](1.2,0)rectangle (2.2,1);
		\draw[](0,0)node[left]{0}rectangle(7,1);
		\draw[](4,0)rectangle(5,1);
	\end{tikzpicture}\\	[2ex]
	\begin{tikzpicture}[x=0.4cm,y=0.4cm]
		\draw[xstep=1,ystep=1,ultra thin](0,0)(13,1);
		\draw[](-1.4,0.5)node{ $\sigma'$};
		\draw[](3.5,0.5)node{\small $j$};
		\draw[](8.5,0.5)node{\small $i$};
		\draw[](1.5,0.5)node{\small $\dots$};
		\draw[](6,0.5)node{\small $\dots$};
		\draw[](9.7,0.5)node{\small $\dots$};
		\draw[->](0,0)--(11,0)node[right]{\small $t$};
		\draw[](0,0)node[left]{0}rectangle(10.5,1);
		\draw[](3,0)rectangle(4,1);
		\draw[](8,0)rectangle(9,1);
	\end{tikzpicture}\\ [2ex]\begin{tikzpicture}[x=0.4cm,y=0.4cm]
		\draw[xstep=1,ystep=1,ultra thin](0,0)(13,1);
		\draw[](-1.4,0.5)node{ $\sigma^*$};
		\draw[](3.5,0.5)node{\small $i$};
		\draw[](8.5,0.5)node{\small $j$};
		\draw[](1.5,0.5)node{\small $\dots$};
		\draw[](6,0.5)node{\small $\dots$};
		\draw[](9.7,0.5)node{\small $\dots$};
		\draw[->](0,0)--(11,0)node[right]{\small $t$};
		\draw[](0,0)node[left]{0}rectangle(10.5,1);
		\draw[](3,0)rectangle(4,1);
		\draw[](8,0)rectangle(9,1);
	\end{tikzpicture}
	\caption{Jobs sequencing for the $1|f\leq \epsilon|\Delta_{max}$ problem.}
	\label{fig:deltamax_ordering}
\end{figure}
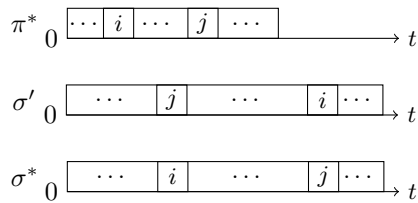
\begin{proof}
	Consider Figure \ref{fig:deltamax_ordering}. Schedule $\pi^*$ shows the original optimal schedule with old jobs only. Schedule $\sigma'$ is a schedule with old and new jobs, where at least two old jobs $i$ and $j$ are swapped with respect to their ordering in $\pi^*$. Suppose that $\sigma'$ is an optimal solution for the problem $1|f\leq \epsilon|\Delta_{max}$. Let $i$ be the first old job that has been moved after other old jobs that followed $i$ in $\pi^*$, and let $j$ be the old job that precedes $i$ in $\sigma'$ w.r.t old jobs only. Notice that in $\sigma'$, before job $j$ there can be both old and new jobs, while between $j$ and $i$ there can be only new jobs because $j$ is the closest old job that precedes $i$ in $\sigma'$.
	Let  $\sigma^*$ be the same schedule than $\sigma'$ but with $i$ and $j$ swapped, such that the starting time of $j$ in $\sigma'$ is the same of $i$ in $\sigma^*$ and $C_i(\sigma')=C_j(\sigma^*)$. In $\sigma^*$, before job $i$ there can be both old and new jobs, while between $i$ and $j$ there can be new jobs only. Since $\sigma'$ is assumed to be optimal, $\Delta_{max}(\sigma')\leq \Delta_{max}(\sigma^*)$. 
	
	The proof is done by contradiction showing that, if there exists such a schedule $\sigma^*$, with $f\leq \epsilon$, then  $\Delta_{max}(\sigma^*)\leq \Delta_{max}(\sigma')$ and $\sigma'$ cannot be optimal.
	
	First, notice that $\Delta_i\geq0$ in both schedules by definition of job $i$ and that $\Delta_i(\sigma^*)< \Delta_i(\sigma')$ since $C_i(\sigma')>C_i(\sigma^*)$. 
	
	Given that $C_j(\pi^*)>C_i(\pi^*)$ and $C_i(\sigma')=C_j(\sigma^*)$, we have that $\Delta_j(\sigma^*)< \Delta_i(\sigma')$. 
	
	Next, we show that $\Delta_j(\sigma')\leq \Delta_i(\sigma')$. If $C_j(\sigma')\geq C_j(\pi^*)$, then this holds, since $i$ is more right-shifted than $j$. If $C_j(\sigma')< C_j(\pi^*)$, we have $\Delta_j(\sigma')\leq C_j(\sigma')-C_j(\sigma^*)$. For the disruption of job $i$ it holds $\Delta_i(\sigma')\geq C_i(\sigma')-C_i(\sigma^*)$ and since $C_j(\sigma')-C_j(\sigma^*)=C_i(\sigma')-C_i(\sigma^*)$, we have $\Delta_j(\sigma')\leq C_j(\sigma')-C_j(\sigma^*)=C_i(\sigma')-C_i(\sigma^*)\leq \Delta_i(\sigma')$.
	
	Putting all together, we obtain
	$$ \max(\Delta_i(\sigma'),\Delta_i(\sigma')) = \Delta_i(\sigma') > \max(\Delta_i(\sigma^*),\Delta_i(\sigma^*)),  $$
	and since all other old jobs do not change their starting and completion times,
	$$ \Delta_{max}(\sigma')>\Delta_{max}(\sigma^*).$$
	
	Repeating the argument for any pair of old jobs $i$ and $j$ defined as above proves the statement.
\end{proof}


Now consider the other complementary problem $1|f\leq \epsilon|\overline \Delta$. 
The following property holds.
\begin{property}
	\label{prop:sumdelta}
	For problems $1|f\leq \epsilon|\overline \Delta$, an optimal schedule may have old jobs scheduled in a different order with respect to  $\pi^*$ even if swapping them keeps the schedule feasible with respect to the constraint $f\leq \epsilon$.
\end{property} 

\begin{proof}
	\begin{figure}[h]
		\centering
		\begin{tikzpicture}[x=0.2cm,y=0.4cm]
			\draw[xstep=1,ystep=1,ultra thin](0,0)(22,1);
			\draw[](-3,0.5)node{$\pi^*$};
			\draw[](-0.6,0)node{\small $0$};
			\draw[](4,0.5)node{\small $i$};
			\draw[](9,0.5)node{\small $j$};
			\draw[](10,-0.7)node{\small $10$};
			\draw[->](0,0)--(15,0)node[right]{\small $t$};
			\draw[-](0,0)--(0,1);
			\draw[-](0,1)--(10,1);
			\draw[-](10,1)--(10,0);
			\draw[-](8,0)--(8,1);
		\end{tikzpicture}\\ [2ex]
		
		\begin{tikzpicture}[x=0.2cm,y=0.4cm]
			\draw[xstep=1,ystep=1,ultra thin](0,0)(22,1);
			\draw[](-3,0.5)node{$\sigma'$};
			\draw[](-0.6,0)node{\small $0$};
			\draw[](16,0.5)node{\small $i$};
			\draw[](11,0.5)node{\small $j$};
			\draw[](5,0.5)node{\small $\dots$};
			\draw[](10,-0.7)node{\small $10$};
			\draw[->](0,0)--(22,0)node[right]{\small $t$};
			\draw[](10,0)rectangle(20,1);
			\draw[](10,0)rectangle(12,1);
			\draw[](12,0)rectangle(20,1);
		\end{tikzpicture}\\ [2ex]
		
		\begin{tikzpicture}[x=0.2cm,y=0.4cm]
			\draw[](-3,0.5)node{$\sigma^*$};
			\draw[xstep=1,ystep=1,ultra thin](0,0)(22,1);
			\draw[](-0.6,0)node{\small $0$};
			\draw[](14,0.5)node{\small $i$};
			\draw[](19,0.5)node{\small $j$};
			\draw[](5,0.5)node{\small $\dots$};
			\draw[](10,-0.7)node{\small $10$};
			\draw[->](0,0)--(22,0)node[right]{\small $t$};
			\draw[](10,0)rectangle(20,1);
			\draw[](10,0)rectangle(18,1);
			\draw[](18,0)rectangle(20,1);
		\end{tikzpicture}
		\caption{Jobs sequencing in the $1|f\leq \epsilon|\overline \Delta$ problem.}
		\label{fig:deltasum_ordering}
	\end{figure}
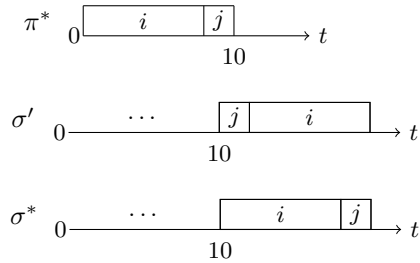
	The proof is done by showing a counterexample.
	Consider two old jobs $i$ and $j$ with $p_i=8$ and $p_j=2$ scheduled consecutively so that $C_i(\pi^*)=8$ and $C_j(\pi^*)=10$ (Figure \ref{fig:deltasum_ordering}). Let be $d_i=C_i(\pi^*)$ and $d_j=C_j(\pi^*)$. 
	
	Schedule $\sigma'$ is a schedule with old and new jobs, where the two old jobs are swapped with respect to their ordering in $\pi^*$. Suppose that $\sigma'$ is optimal for solving $1|f\leq \epsilon|\overline \Delta$. 
	Let be $\sigma^*$ a feasible schedule, with all jobs, except jobs $i$ and $j$, which are swapped, scheduled as in $\sigma'$.
	Assume that the starting time of job $i$ in $\sigma'$ as well as the starting time of job $j$ in $\sigma^*$ is $t=10$.
	Then, the optimal sequencing of the jobs is  $(j,\,i)$ with a resulting $\overline \Delta(\sigma^*)=14 < \overline \Delta(\sigma')=20$. Since any other new job scheduled does not have an effect on the total disruption, schedule $\sigma'$ is suboptimal.
\end{proof} 

\subsection{Relevance of idle-time insertion}
Usually, minimizing a regular objective function allows to consider only the sequencing or permutation problem in scheduling. However, for our rescheduling problems, the constraint on the total disruption implicitly models a second objective, that is non regular. So, inserting idle times becomes relevant for solving the problem. Several considerations and examples are given to provide insights on when the insertion of machine idle times is required.

Consider the following example. Three old jobs are originally scheduled in $\pi^*=(i,j,k)$. Let be $p_i=6,p_j=p_k=1$. Assume that the minimum cost schedule with no inserted idle times, that includes a new job $h$, with $p_h=1$, is $\sigma^*_{no-idle}=(h,j,k,i)$ as in Figure \ref{fig:disrup_decrease}. In schedule $\sigma^*_{no-idle}$, we have $C_j(\pi^*)-C_j(\sigma^*_{no-idle})>C_i(\sigma^*_{no-idle})-C_i(\pi^*)$ and $C_k(\pi^*)-C_k(\sigma^*_{no-idle})>C_i(\sigma^*_{no-idle})-C_i(\pi^*)$. 
In this case, constructing a schedule $\sigma^*_{idle}$ with one unit of idle time before job $j$ reduces both $\Delta_{max}$ and $\overline \Delta$. So, in this example, for the $\Delta_{max}$ criterion and $\epsilon=5$, schedule $\sigma^*_{idle}$ makes this sequence of jobs feasible, and possibly optimal, as well as for the $\overline \Delta$ criterion and $\epsilon=14$.
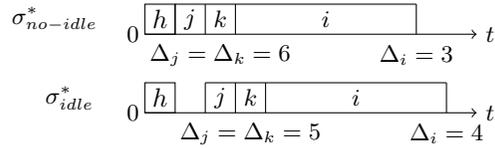
\begin{figure}[h]
	\centering
	\begin{tikzpicture}[x=0.4cm,y=0.4cm]
		\draw[xstep=1,ystep=1,ultra thin](-5,0)(10,1);
		\draw[](-3,0.5)node{\small $\sigma^*_{no-idle}$};
		\draw[](-0.4,0)node{\small $0$};
		\draw[](6,0.5)node{\small $i$};
		\draw[](1.5,0.5)node{\small $j$};
		\draw[](2.5,0.5)node{\small $k$};
		\draw[](9,-0.7)node{\small $\Delta_i=3$};
		\draw[](2.5,-0.7)node{\small $\Delta_j=\Delta_k=6$};
		\draw[->](0,0)--(11,0)node[right]{\small $t$};
		\draw[-](1,0)--(1,1);
		\draw[-](1,1)--(9,1);
		\draw[-](9,1)--(9,0);
		\draw[-](2,0)--(2,1);
		\draw[-](3,0)--(3,1);
		\draw[](0,0)rectangle(1,1);
		\draw[](0.5,0.5)node{$h$};
	\end{tikzpicture}\\
	
	\begin{tikzpicture}[x=0.4cm,y=0.4cm]
		\draw[xstep=1,ystep=1,ultra thin](-5,0)(10,1);
		\draw[](-2.5,0.5)node{\small $\sigma^*_{idle}$};
		\draw[](-0.4,0)node{\small $0$};
		\draw[](7,0.5)node{\small $i$};
		\draw[](2.5,0.5)node{\small $j$};
		\draw[](3.5,0.5)node{\small $k$};
		\draw[](10,-0.7)node{\small $\Delta_i=4$};
		\draw[](3.5,-0.7)node{\small $\Delta_j=\Delta_k=5$};
		\draw[->](0,0)--(11,0)node[right]{\small $t$};
		\draw[-](2,0)--(2,1);
		\draw[-](2,1)--(10,1);
		\draw[-](10,1)--(10,0);
		
		\draw[-](3,0)--(3,1);
		\draw[-](4,0)--(4,1);
		\draw[](0,0)rectangle(1,1);
		\draw[](0.5,0.5)node{$h$};
	\end{tikzpicture}
	\caption{Decrease of the schedule disruption via idle time insertion.}
	\label{fig:disrup_decrease}
\end{figure}

The example gives a first feeling of the correlation between the re-sequencing of old jobs and the insertion of idle times to obtain optimal solutions. Notice that the decrease of the disruption by right-shifting is in fact due by the fact that jobs $j$ and $k$ were scheduled before their original completion time, and $i$ after.

The following property applies to an optimal schedule, where the old jobs are sequenced as in the original schedule. 
\begin{property}
	\label{prop::seq_old}
	If there exists an optimal solution in which old jobs are ordered as in the original schedule $\pi^*$, then this solution is without inserted idle times.
\end{property}
\begin{proof} 
	Since jobs in $J^O$ are in the same order as in schedule $\pi^*$, the disruption of any job $j$ in an optimal schedule $\sigma^*$ is given by $\Delta_j=\sum_{i\in J^N: i\to j}p_i$, where $i\in J^N: i\to j$ denote the new jobs $i$ that precede $j$ in the schedule. By inserting an idle time $\delta> 0$ before job $j$, the disruption changes to $\Delta'_j=\sum_{i\in J^N: i\to j}p_i + \delta$. Then, in this new schedule $\sigma'$, $\Delta_{max}(\sigma')\geq \Delta_{max}(\sigma^*)$ as well as $\overline \Delta(\sigma')> \overline \Delta(\sigma^*)$.\\
	Also, $f(C_j+\delta)>f(C_j),\,\ \forall j\in J^O\cup J^N$ since the $f_j$'s are regular functions. Then, $f_{max}(\sigma')\geq f_{max}(\sigma^*)$ as well as $\overline f(\sigma')> \overline f(\sigma^*)$.\\
	Hence, inserting any idle time does not decrease the considered objective functions nor the disruption criteria.
\end{proof}


Let us now consider some rescheduling problems with particular objective functions. Hereafter, we consider the particular cases $f_{max} = L_{max}$ and $\overline f \in \{\overline C; \overline{wC}; \overline U; \overline T\}$. 

\begin{prop} \label{prop:no-idle_prbs}
	\textup{\textbf{(\cite{HP:04}, \cite{lendl})}} There exists an optimal solution to problems $1|\Delta_{max}\leq \epsilon|f$, $f\in \{L_{max}, \overline C, \overline{wC}\}$, and $1|\overline \Delta\leq \epsilon|\overline C$, where the jobs are processed consecutively without inserted idle times.
\end{prop}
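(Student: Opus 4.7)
My plan is to treat the two disruption criteria separately. For the three problems constrained by $\Delta_{max}$, I would combine Property~\ref{prop::deltamax} with Property~\ref{prop::seq_old} in a two-stage reduction. For the remaining problem $1|\overline\Delta\leq\epsilon|\overline C$, Property~\ref{prop:sumdelta} blocks that shortcut, so a direct compaction argument is needed.

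For $1|\Delta_{max}\leq\epsilon|f$ with $f\in\{L_{max},\overline C,\overline{wC}\}$, take any optimal $\sigma$. Iterating the pairwise swap construction used in the proof of Property~\ref{prop::deltamax} brings the old jobs back into $\pi^*$-order without ever increasing $\Delta_{max}$, so feasibility is preserved. The second step is to verify that these swaps never worsen $f$. For $L_{max}$ this follows from the classical EDD exchange argument applied inside the swap window, since the completion time of the later of the two swapped jobs is invariant and $\pi^*$ is EDD on $J^O$. For $\overline C$ the SPT exchange yields a net change of $p_i-p_j\leq 0$ (where $i$ precedes $j$ in $\pi^*$). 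For $\overline{wC}$ the corresponding WSPT interchange plays the same role, although the presence of new jobs inside the window makes the argument more delicate and requires a careful case analysis. Once old jobs are in $\pi^*$-order, Property~\ref{prop::seq_old} delivers, by left-shifting, an equally good schedule with no inserted idle time.

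For $1|\overline\Delta\leq\epsilon|\overline C$ I would argue by contradiction. Assume an optimal $\sigma$ contains an idle block of length $\delta>0$ and let $\sigma'$ be obtained by left-shifting every job after the block by $\delta$. Regularity of $\overline C$ gives $\overline C(\sigma')<\overline C(\sigma)$, so the contradiction is immediate whenever $\overline\Delta(\sigma')\leq\epsilon$. As a function of the retained idle length $\alpha\in[0,\delta]$, $\overline\Delta$ is piecewise-linear and convex, and its maximum on $[0,\delta]$ is attained at an endpoint. If that endpoint is $\alpha=\delta$ we are done. Otherwise the maximum sits at $\alpha=0$, meaning that some old job $k$ scheduled after the idle has $C_k(\sigma)<C_k(\pi^*)$. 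Because $\pi^*$ is SPT on $J^O$, $k$ must have been shifted past either a longer old job or a new job; an adjacent interchange of $k$ with that predecessor simultaneously absorbs the idle, strictly improves $\overline C$ and does not increase $\overline\Delta$, which contradicts the optimality of $\sigma$.

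The main obstacle is this last sub-case: convexity of $\overline\Delta$ alone does not close the argument, and the SPT-specific interchange for $\overline C$ is what makes it work. This is also why the statement is limited to $\overline C$ among the $\overline\Delta$-constrained problems.
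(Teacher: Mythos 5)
First, note that the paper does not actually prove Proposition~\ref{prop:no-idle_prbs}: it is imported from \cite{HP:04} and \cite{lendl}, and the surrounding text only remarks that in all four cases those references establish the hypothesis of Property~\ref{prop::seq_old} (existence of an optimal solution with old jobs in $\pi^*$-order), from which the no-idle conclusion follows. Your skeleton for the $\Delta_{max}$ cases (restore the $\pi^*$-order by interchanges, then apply Property~\ref{prop::seq_old}) is exactly that outline, but since you set out to supply the order-preservation step yourself, the gaps in your interchange arguments matter. The swap of Property~\ref{prop::deltamax} fixes the start of the first slot and the completion of the last slot, so when $p_i\neq p_j$ every new job inside the window is shifted by $p_i-p_j$. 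For $L_{max}$ with $\pi^*$ in EDD order the sign of $p_i-p_j$ is uncontrolled, so those new jobs may be delayed and their lateness may increase; ``the classical EDD exchange inside the window'' does not dispose of this. For $\overline{wC}$ you explicitly defer the case analysis, but that analysis is precisely the content of the \cite{lendl} result, so nothing is established there. Only the $\overline C$ case, where SPT gives $p_i\leq p_j$ and every completion time in the window weakly decreases, actually goes through as you describe.

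For $1|\overline\Delta\leq\epsilon|\overline C$ you misapply Property~\ref{prop:sumdelta}: it concerns the complementary problem $1|f\leq\epsilon|\overline\Delta$, not $1|\overline\Delta\leq\epsilon|\overline C$, and it does not block the order-preservation route here --- \cite{HP:04} show that for $\overline C$ under the $\overline\Delta$ constraint some optimal schedule keeps the old jobs in their original SPT order, which is exactly why this is the one $\overline\Delta$ entry marked \emph{n-mit} in Table~\ref{tb:idleness}. Your substitute compaction argument is sound up to the convexity observation (and what you need is only $\overline\Delta(0)\leq\epsilon$, which is immediate when $\alpha=\delta$ maximizes $\overline\Delta$ on $[0,\delta]$), but the residual sub-case ends with the bare assertion that one adjacent interchange of $k$ with ``that predecessor'' absorbs the idle, strictly improves $\overline C$, and does not increase $\overline\Delta$. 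The job $k$ overtook in $\sigma$ is now a successor of $k$, not a predecessor, and the effect of the proposed interchange on the disruption of the other job involved (and on $k$ itself, whose completion time may overshoot $C_k(\pi^*)$) is never analyzed. As written, neither the $\overline{wC}$ case nor the $\overline\Delta$ case is proved; the cleaner route for the latter is to first establish SPT-order preservation of the old jobs and then invoke Property~\ref{prop::seq_old}, as the paper does for all four problems.
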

\cite{HP:04} provide the result for the problems with $L_{max}$ and $\overline C$, while \cite{lendl} consider the case with $\overline{wC}$. In all four cases, Property \ref{prop::seq_old} is shown to hold and therefore, it is not necessary to consider idle time insertion to obtain optimal solutions.

\begin{prop}\label{prop:idle_prbs}
	An optimal solution to problems $1|\Delta_{max}\leq \epsilon|f_1$, $f_1\in \{\overline U, \overline{T}\}$, and $1|\overline \Delta\leq \epsilon|f_2$, $f_2\in \{L_{max}, \overline{wC}, \overline{U}, \overline{T}\}$ may contain inserted idle times.
\end{prop}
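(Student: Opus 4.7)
The plan is to prove this proposition by exhibiting, for each of the six problem variants, a small instance whose optimum is attained only with inserted idle time. The key mechanism, already visible in Figure \ref{fig:disrup_decrease}, is that when the preferred sequence $\sigma$ resequences the old jobs so that several of them complete earlier than in $\pi^*$ while the others complete later, its disruption profile is unbalanced. A short idle period inserted in $\sigma$ right-shifts every job following it, lowering the disruption of the early-completing old jobs and raising that of the late-completing ones. Whenever this rebalancing brings $\Delta_{max}(\sigma)$ or $\overline\Delta(\sigma)$ below the budget $\epsilon$ while every no-idle schedule that attains the same objective value violates the budget, idle time is unavoidable.

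The concrete starting point is the instance of Figure \ref{fig:disrup_decrease}: three old jobs $i,j,k$ with $p_i=6$ and $p_j=p_k=1$, sequenced as $\pi^*=(i,j,k)$, together with a single new job $h$ of length $p_h=1$. In this instance the resequenced schedule $(h,j,k,i)$ yields $\Delta_{max}=5$ and $\overline\Delta=13$ with no idle time, and $\Delta_{max}=4$ and $\overline\Delta=12$ when one unit of idle time is inserted before $j$. For each of the two problems under the $\Delta_{max}$ budget I would assign due dates to $i,j,k,h$ so that $\pi^*$ remains optimal on $J^O$ alone and so that $(h,j,k,i)$ is the unique sequence attaining the optimal value of the rescheduling objective, and then set $\epsilon=4$. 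For each of the four problems under the $\overline\Delta$ budget I would use the same skeleton with $\epsilon$ chosen strictly between $12$ and $13$, and tune due dates (for $L_{max}$, $\overline U$, $\overline T$) or weights (for $\overline{wC}$) so that $(h,j,k,i)$ strictly dominates every schedule preserving the $\pi^*$-order of old jobs and every other resequencing that remains no-idle-feasible within the budget.

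The hard part will be the case-by-case verification that no no-idle permutation of $J^O\cup J^N$ simultaneously meets the disruption budget and matches the optimal objective value of the idle-inserted schedule. Since only four jobs are involved the enumeration is finite and can be done by direct inspection, but the numerical parameters must be chosen carefully: the asymmetry $p_i\gg p_j=p_k$ is exploited so that any non-trivial reordering of $\{i,j,k\}$ either coincides with the intended $(h,j,k,i)$ or produces disruption strictly exceeding the budget, while the due dates or weights are tuned so that every schedule keeping the $\pi^*$-order has strictly worse objective than the idle-inserted $(h,j,k,i)$. Property \ref{prop::seq_old} then guarantees that any optimal schedule with old jobs in $\pi^*$-order would itself contain no idle time, so only resequenced schedules can compete for optimality, and the finite comparison above completes the argument.
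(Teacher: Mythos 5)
Your high-level strategy is the same as the paper's -- one explicit counterexample per problem -- but what you have written is a plan rather than a proof: for an existence statement of this kind the counterexamples \emph{are} the proof, and you never fix a single due date or weight, deferring the entire case-by-case verification. That deferral hides a fatal flaw in the one design choice you do commit to, namely reusing the skeleton of Figure~\ref{fig:disrup_decrease} (old jobs with $p_i=6$, $p_j=p_k=1$ and one new job $h$ with $p_h=1$) for all six problems. The obstruction is the no-idle schedule $(h,i,j,k)$, which keeps the $\pi^*$-order and delays every old job by exactly $p_h=1$: it has $\overline\Delta=3$ and $\Delta_{max}=1$, so it is feasible for any budget that admits your resequenced schedule. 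For $1|\overline\Delta\le\epsilon|L_{max}$, optimality of $\pi^*=(i,j,k)$ on the old jobs forces $L_{max}(\pi^*)\le L_{max}(j,k,i)$, which yields $L_{max}(\pi^*)\le 8-d_i$; hence $L_{max}(h,i,j,k)\le\max(1-d_h,\,L_{max}(\pi^*)+1)\le\max(1-d_h,\,9-d_i)$, while your idle schedule has $C_i=10$ and thus $L_{max}\ge\max(1-d_h,\,10-d_i)$. For $1|\overline\Delta\le\epsilon|\overline{wC}$, optimality of $\pi^*$ (WSPT) forces $w_i\ge 6w_j$ and $w_j\ge w_k$, so the cost difference between your idle schedule and $(h,i,j,k)$, which equals $3w_i-5w_j-5w_k$, is strictly positive. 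In both cases, no tuning of due dates or weights lets $(h,[\text{idle}],j,k,i)$ strictly dominate the order-preserving no-idle schedule, so the skeleton provably cannot establish these two items of the proposition.

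The paper escapes this by not using a short new job: its $L_{max}$ instance uses a new job with $p=7$ and due date $1$, and its $\overline{wC}$ instance a new job with $p=4$ and weight $12$, so that the order-preserving insertion either violates the disruption budget or is heavily penalized, resequencing is forced, and the idle time is what makes the resequenced order feasible or cheap. Your skeleton can be made to work for the two $\Delta_{max}$ cases (e.g.\ $d_j=7$, $d_k=8$, $d_h=1$, $d_i=10$ for $\overline U$ and $d_i=9$ for $\overline T$, with $\epsilon=4$), but even there the verification is not a formality: the natural choice $d_i=6$ fails because $(i,j,k,h)$ or $(h,i,j,k)$ ties or wins. You need to actually exhibit and check the six instances, and at least two of them require a different job set than the one you propose.
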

\begin{proof}
	To prove each of the stated results we provide, for each problem, a counterexample that shows the sub-optimality of a schedule with no idle time inserted.
	
	We start with the problem $1|\Delta_{max}\leq\epsilon|\overline U$.
	Consider the schedule $\pi^*$ in Figure \ref{fig:ex_deltamax_U}, with two old jobs $i$ and $j$, with $p_i=8,\, p_j=1$ and $d_i=8,\,d_j=9$ and two new jobs $k$ and $h$  with $p_k=p_h=1$ and due dates $d_k=d_h=8$. Let be $\epsilon=5$. An optimal solution for this instance is given by schedule $\sigma_{idle}$ with $\overline U=1$, and where the idle time enables to meet the disruption constraint. In contrast, restricting to the set of schedules without inserted idle times leads to the optimal solution $\sigma_{no-idle}$ with $\overline U=2$.
	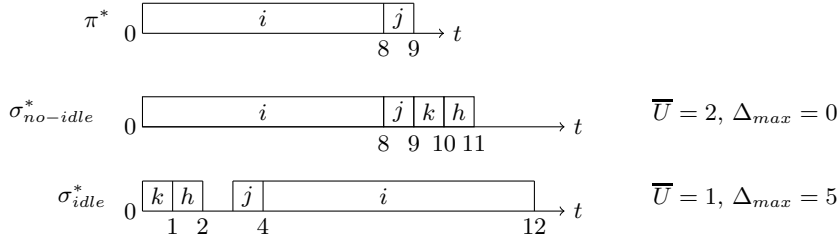
\begin{figure}[h]
		\centering
		\begin{tikzpicture}[x=0.4cm,y=0.4cm]
			\draw[xstep=1,ystep=1,ultra thin](-6,0)(25,1);
			\draw[](-1.5,0.5)node{\small $\pi^*$};
			\draw[](-0.4,0)node{\small $0$};
			\draw[](4,0.5)node{\small $i$};
			\draw[](8.5,0.5)node{\small $j$};
			\draw[](8,-0.5)node{\small $8$};
			\draw[](9,-0.5)node{\small $9$};
			\draw[->](0,0)--(10,0)node[right]{\small $t$};
			\draw[-](0,0)--(0,1);
			\draw[-](0,1)--(9,1);
			\draw[-](9,1)--(9,0);
			\draw[-](8,0)--(8,1);
		\end{tikzpicture} \\ [2ex]
		\begin{tikzpicture}[x=0.4cm,y=0.4cm]
			\draw[xstep=1,ystep=1,ultra thin](-6,0)(25,1);
			\draw[](-3,0.5)node{\small $\sigma^*_{no-idle}$};
			\draw[](-0.4,0)node{\small $0$};
			\draw[](9.5,0.5)node{\small $k$};
			\draw[](10.5,0.5)node{\small $h$};
			\draw[](8.5,0.5)node{\small $j$};
			\draw[](4,0.5)node{\small $i$};
			
			\draw[](20,0.5)node{\small $\overline U=2,\,\Delta_{max}=0$};
			\draw[->](0,0)--(14,0)node[right]{\small $t$};
			\draw[](0,1)rectangle(8,0)node[below]{\small 8};
			\draw[](8,1)rectangle(9,0)node[below]{\small 9};
			\draw[](9,1)rectangle(10,0)node[below]{\small 10};
			\draw[](10,1)rectangle(11,0)node[below]{\small 11};
		\end{tikzpicture}\\ [1ex]
		\begin{tikzpicture}[x=0.4cm,y=0.4cm]
			\draw[xstep=1,ystep=1,ultra thin](-6,0)(25,1);
			\draw[](-2,0.5)node{\small $\sigma^*_{idle}$};
			\draw[](-0.4,0)node{\small $0$};
			\draw[](0.5,0.5)node{\small $k$};
			\draw[](1.5,0.5)node{\small $h$};
			\draw[](3.5,0.5)node{\small $j$};
			\draw[](8,0.5)node{\small $i$};
			\draw[](20,0.5)node{\small $\overline U=1,\,\Delta_{max}=5$};
			\draw[](1,-0.5)node{\small $1$};
			\draw[](2,-0.5)node{\small $2$};
			\draw[](4,-0.5)node{\small $4$};
			\draw[](13,-0.5)node{\small $12$};
			\draw[->](0,0)--(14,0)node[right]{\small $t$};
			\draw[-](0,0)--(0,1);
			\draw[-](0,1)--(2,1);
			\draw[-](3,1)--(13,1);
			\draw[-](13,1)--(13,0);
			\draw[-](1,1)--(1,0);
			\draw[-](2,0)--(2,1);
			\draw[-](3,1)--(3,0);
			\draw[-](4,1)--(4,0);
		\end{tikzpicture}
		\caption{Idle time insertion for the $1|\Delta_{max}\leq\epsilon|\overline U$ problem}
		\label{fig:ex_deltamax_U}
	\end{figure}
	
	Next, we consider the problem $1|\Delta_{max}\leq\epsilon|\overline T$.
	Consider jobs $i,j,k,h$ from the above example and $\epsilon=5$. Consider an additional old job $\ell$ with $p_\ell=1$ and $d_\ell=10$. Schedule $\pi^*$ is shown in Figure \ref{fig:ex_deltamax_T}. The optimal schedule reaches $\overline T=5$. In contrast, restricting to the set of schedules without inserted idle times leads to the optimal solution $\sigma_{no-idle}$ with $\overline T=7$.
	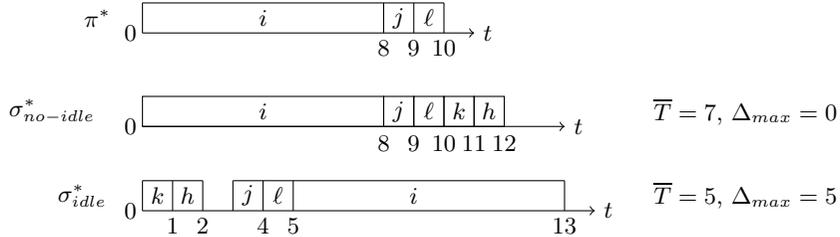
\begin{figure}[h]
		\centering
		\begin{tikzpicture}[x=0.4cm,y=0.4cm]
			\draw[xstep=1,ystep=1,ultra thin](-6,0)(25,1);
			\draw[](-1.5,0.5)node{\small $\pi^*$};
			\draw[](-0.4,0)node{\small $0$};
			\draw[](4,0.5)node{\small $i$};
			\draw[](8.5,0.5)node{\small $j$};
			\draw[](9.5,0.5)node{\small $\ell$};
			\draw[](8,-0.5)node{\small $8$};
			\draw[](9,-0.5)node{\small $9$};
			\draw[](10,-0.5)node{\small $10$};
			\draw[->](0,0)--(11,0)node[right]{\small $t$};
			\draw[-](0,0)--(0,1);
			\draw[-](0,1)--(10,1);
			\draw[-](10,1)--(10,0);
			\draw[-](8,0)--(8,1);
			\draw[-](9,0)--(9,1);
		\end{tikzpicture}\\ [2ex]
		\begin{tikzpicture}[x=0.4cm,y=0.4cm]
			\draw[xstep=1,ystep=1,ultra thin](-6,0)(25,1);
			\draw[](-3,0.5)node{\small $\sigma^*_{no-idle}$};
			\draw[](-0.4,0)node{\small $0$};
			\draw[](9.5,0.5)node{\small $\ell$};
			\draw[](10.5,0.5)node{\small $k$};
			\draw[](11.5,0.5)node{\small $h$};
			\draw[](8.5,0.5)node{\small $j$};
			\draw[](4,0.5)node{\small $i$};
			\draw[](20,0.5)node{\small $\overline T=7,\,\Delta_{max}=0$};
			\draw[->](0,0)--(14,0)node[right]{\small $t$};
			\draw[](0,1)rectangle(8,0)node[below]{\small 8};
			\draw[](8,1)rectangle(9,0)node[below]{\small 9};
			\draw[](9,1)rectangle(10,0)node[below]{\small 10};
			\draw[](10,1)rectangle(11,0)node[below]{\small 11};
			\draw[](11,1)rectangle(12,0)node[below]{\small 12};
		\end{tikzpicture}\\ [1ex]
		\begin{tikzpicture}[x=0.4cm,y=0.4cm]
			\draw[xstep=1,ystep=1,ultra thin](-6,0)(25,1);
			\draw[](-2,0.5)node{\small $\sigma^*_{idle}$};
			\draw[](-0.4,0)node{\small $0$};
			\draw[](0.5,0.5)node{\small $k$};
			\draw[](1.5,0.5)node{\small $h$};
			\draw[](3.5,0.5)node{\small $j$};
			\draw[](9,0.5)node{\small $i$};
			\draw[](4.5,0.5)node{\small $\ell$};
			\draw[](20,0.5)node{\small $\overline T=5,\,\Delta_{max}=5$};
			\draw[](1,-0.5)node{\small $1$};
			\draw[](2,-0.5)node{\small $2$};
			\draw[](4,-0.5)node{\small $4$};
			\draw[](5,-0.5)node{\small $5$};
			\draw[](14,-0.5)node{\small $13$};
			\draw[->](0,0)--(15,0)node[right]{\small $t$};
			\draw[-](0,0)--(0,1);
			\draw[-](0,1)--(2,1);
			\draw[-](3,1)--(14,1);
			\draw[-](14,1)--(14,0);
			\draw[-](1,1)--(1,0);
			\draw[-](2,0)--(2,1);
			\draw[-](3,1)--(3,0);
			\draw[-](4,1)--(4,0);
			\draw[-](5,1)--(5,0);
		\end{tikzpicture}
		\caption{Idle time insertion for the $1|\Delta_{max}\leq\epsilon|\overline T$ problem}
		\label{fig:ex_deltamax_T}
	\end{figure}
	
	Now, we turn to the problem $1|\overline \Delta\leq\epsilon|L_{max}$. Let be $i,j,h$ the jobs in $J^O$ with $p_i=8, p_j=p_h=1$ and $d_i=12,d_j=13,d_h=14$ and $k$ a new job with $p_k=7$ and $d_k=1$. Let us have $\epsilon=10$. As illustrated in Figure \ref{fig:ex_deltasum_lmax}, we assume that $\pi^*=(i,j,h)$. 
	the optimal schedule is $\sigma^*_{idle}=(k,j,h,i)$ with $L_{max}=10$ with an idle time of 1 unit before job $j$. In contrast, restricting to the set of schedules without inserted idle times leads to the optimal solution $\sigma_{no-idle}$ with $L_{max}=14$.
	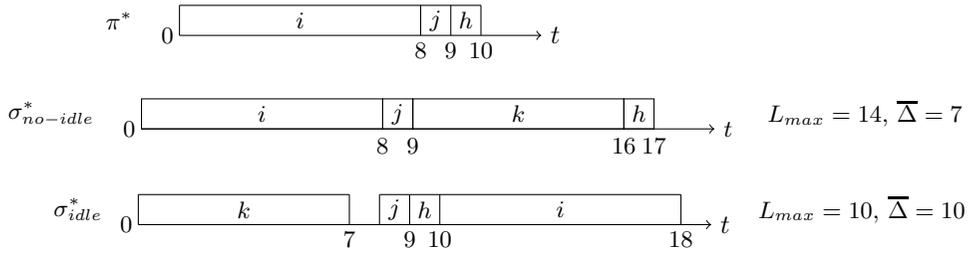
\begin{figure}[h]
		\centering
		\begin{tikzpicture}[x=0.4cm,y=0.4cm]
			\draw[xstep=1,ystep=1,ultra thin](-6,0)(25,1);
			\draw[](-2,0.5)node{\small $\pi^*$};
			\draw[](-0.4,0)node{\small $0$};
			\draw[](4,0.5)node{\small$i$};
			\draw[](8.5,0.5)node{\small $j$};
			\draw[](9.5,0.5)node{\small $h$};
			\draw[](8,-0.5)node{\small $8$};
			\draw[](9,-0.5)node{\small $9$};
			\draw[](10,-0.5)node{\small $10$};
			\draw[->](0,0)--(12,0)node[right]{$t$};
			\draw[-](0,0)--(0,1);
			\draw[-](0,1)--(10,1);
			\draw[-](8,0)--(8,1);
			\draw[-](9,0)--(9,1);
			\draw[-](10,0)--(10,1);	
		\end{tikzpicture} \\ [2ex]
		\begin{tikzpicture}[x=0.4cm,y=0.4cm]
			\draw[xstep=1,ystep=1,ultra thin](-6,0)(25,1);
			\draw[](-3,0.5)node{\small $\sigma^*_{no-idle}$};
			\draw[](-0.4,0)node{\small $0$};
			\draw[](12.5,0.5)node{\small $k$};
			\draw[](16.5,0.5)node{\small $h$};
			\draw[](8.5,0.5)node{\small $j$};
			\draw[](4,0.5)node{\small $i$};
			\draw[](24,0.5)node{\small $ L_{max}=14,\,\overline \Delta=7$};
			\draw[->](0,0)--(19,0)node[right]{\small $t$};
			\draw[](0,1)rectangle(8,0)node[below]{\small 8};
			\draw[](8,1)rectangle(9,0)node[below]{\small 9};
			\draw[](9,1)rectangle(16,0)node[below]{\small 16};
			\draw[](16,1)rectangle(17,0)node[below]{\small 17};
		\end{tikzpicture}\\ [2ex]
		\begin{tikzpicture}[x=0.4cm,y=0.4cm]
			\draw[xstep=1,ystep=1,ultra thin](-6,0)(25,1);
			\draw[](-2,0.5)node{\small $\sigma^*_{idle}$};
			\draw[](-0.4,0)node{\small $0$};
			\draw[](3.5,0.5)node{\small$k$};
			\draw[](8.5,0.5)node{\small $j$};
			\draw[](9.5,0.5)node{\small$h$};
			\draw[](14,0.5)node{\small $i$};
			\draw[](24,0.5)node{\small $L_{max}=10,\,\overline \Delta=10$};
			\draw[](7,-0.5)node{\small $7$};
			\draw[](9,-0.5)node{\small $9$};
			\draw[](10,-0.5)node{\small $10$};
			\draw[](18,-0.5)node{\small $18$};
			\draw[->](0,0)--(19,0)node[right]{$t$};
			\draw[-](0,0)--(0,1);
			\draw[-](0,1)--(7,1);
			\draw[-](8,1)--(18,1);
			\draw[-](18,1)--(18,0);
			
			\draw[-](7,0)--(7,1);
			\draw[-](8,0)--(8,1);
			\draw[-](9,0)--(9,1);
			\draw[-](10,0)--(10,1);
		\end{tikzpicture}
		\caption{Idle time insertion for the $1|\overline \Delta\leq\epsilon|L_{max}$ problem}
		\label{fig:ex_deltasum_lmax}
	\end{figure}

	Next, we consider the problem  $1|\overline \Delta\leq\epsilon|\overline{wC}$.
	Consider jobs $i,j,h\in J^O$ and $k\in J^N$ with $p_i=6,p_j=p_h=1,p_k=4$ and $w_i=7,w_j=w_h=1,w_h=12$. For $\epsilon=7$, the optimal schedule is $\sigma^*_{idle}=(k,j,h,i)$ with $\overline{wC}=139$ with an idle time of 1 unit before job $j$ (Figure \ref{fig:ex_deltasum_wc}). However, by imposing that no inserted idle time is allowed, the optimal solution becomes $\sigma^*_{no-idle}$ with $\overline{wC}=193$.
	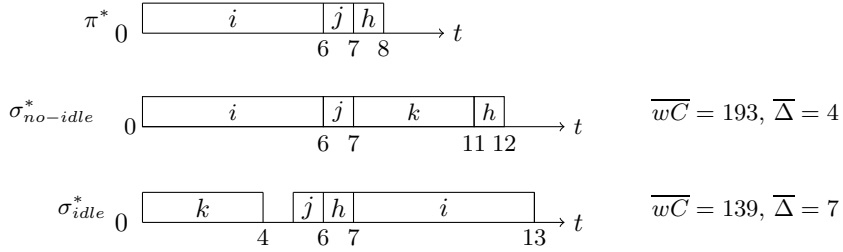
\begin{figure}[h]
		\centering
		\begin{tikzpicture}[x=0.4cm,y=0.4cm]
			\draw[xstep=1,ystep=1,ultra thin](-6,0)(25,1);
			\draw[](-1.5,0.5)node{\small $\pi^*$};
			\draw[](-0.7,0)node{$0$};
			\draw[](3,0.5)node{$i$};
			\draw[](6.5,0.5)node{$j$};
			\draw[](7.5,0.5)node{$h$};
			\draw[](6,-0.5)node{\small $6$};
			\draw[](7,-0.5)node{\small $7$};
			\draw[](8,-0.5)node{\small $8$};
			\draw[->](0,0)--(10,0)node[right]{$t$};
			\draw[-](0,0)--(0,1);
			\draw[-](0,1)--(8,1);
			\draw[-](6,0)--(6,1);
			\draw[-](7,0)--(7,1);
			\draw[-](8,0)--(8,1);	
		\end{tikzpicture} \\ [2ex]
		\begin{tikzpicture}[x=0.4cm,y=0.4cm]
			\draw[xstep=1,ystep=1,ultra thin](-6,0)(25,1);
			\draw[](-3,0.5)node{\small $\sigma^*_{no-idle}$};
			\draw[](-0.4,0)node{\small $0$};
			\draw[](9,0.5)node{\small $k$};
			\draw[](11.5,0.5)node{\small $h$};
			\draw[](6.5,0.5)node{\small $j$};
			\draw[](3,0.5)node{\small $i$};
			\draw[](20,0.5)node{\small $\overline {wC}=193,\,\overline \Delta=4$};
			\draw[->](0,0)--(14,0)node[right]{\small $t$};
			\draw[](0,1)rectangle(6,0)node[below]{\small 6};
			\draw[](6,1)rectangle(7,0)node[below]{\small 7};
			\draw[](7,1)rectangle(11,0)node[below]{\small 11};
			\draw[](11,1)rectangle(12,0)node[below]{\small 12};
		\end{tikzpicture}\\ [2ex]
		\begin{tikzpicture}[x=0.4cm,y=0.4cm]
			\draw[xstep=1,ystep=1,ultra thin](-6,0)(25,1);
			\draw[](-2,0.5)node{\small $\sigma^*_{idle}$};
			\draw[](-0.7,0)node{$0$};
			\draw[](2,0.5)node{$k$};
			\draw[](5.5,0.5)node{$j$};
			\draw[](6.5,0.5)node{$h$};
			\draw[](10,0.5)node{$i$};
			\draw[](20,0.5)node{\small $\overline {wC}=139,\,\overline \Delta=7$};
			\draw[](4,-0.5)node{\small $4$};
			\draw[](6,-0.5)node{\small $6$};
			\draw[](7,-0.5)node{\small $7$};
			\draw[](13,-0.5)node{\small $13$};
			\draw[->](0,0)--(14,0)node[right]{$t$};
			\draw[-](0,0)--(0,1);
			\draw[-](0,1)--(4,1);
			\draw[-](5,1)--(13,1);
			\draw[-](13,1)--(13,0);
			
			\draw[-](4,0)--(4,1);
			\draw[-](5,0)--(5,1);
			\draw[-](6,0)--(6,1);
			\draw[-](7,0)--(7,1);
			\draw[-](13,0)--(13,1);
		\end{tikzpicture}
		\caption{Idle time insertion for the $1|\overline \Delta\leq\epsilon|\overline{wC}$ problem}
		\label{fig:ex_deltasum_wc}
	\end{figure}

	Finally, we consider problems  $1|\overline \Delta\leq\epsilon|\{\overline{U},\overline T\}$.
	Set $J^O$ is made up of the three jobs $i,j,\ell$ with $p_i=8,\, p_j=p_\ell=1$ and $d_i=8,\,d_j=9,\,d_\ell=10$ and two new jobs $k$ and $h$  with $p_k=p_h=1$ and $d_k=d_h=8$.
	Let be $\epsilon=5$ and the optimal solution with $\overline U=1$ is given by schedule $\{k,h,j,\ell,i\}$ with one unit of idle time before $j$ (Figure \ref{fig:ex_deltasum_UT}). However, by imposing that no inserted idle time is allowed, the optimal solution becomes $\sigma_{no-idle}$ with $\overline U=2$. 
	
	The same counterexample holds for $\overline f=\overline T$, with a resulting objective $\overline T=5$ of an optimal schedule with idle time and $\overline T=7$ when imposing a constraint of no inserted idle time. 
	\begin{figure}[h]
		\centering
		\begin{tikzpicture}[x=0.4cm,y=0.4cm]
			\draw[xstep=1,ystep=1,ultra thin](-6,0)(25,1);
			\draw[](-1.5,0.5)node{\small $\pi^*$};
			\draw[](-0.4,0)node{\small $0$};
			\draw[](4,0.5)node{\small $i$};
			\draw[](8.5,0.5)node{\small $j$};
			\draw[](9.5,0.5)node{\small $l$};
			\draw[](8,-0.5)node{\small $8$};
			\draw[](9,-0.5)node{\small $9$};
			\draw[](10,-0.5)node{\small $10$};
			\draw[->](0,0)--(11,0)node[right]{\small $t$};
			\draw[-](0,0)--(0,1);
			\draw[-](0,1)--(10,1);
			\draw[-](10,1)--(10,0);
			\draw[-](8,0)--(8,1);
			\draw[-](9,0)--(9,1);
		\end{tikzpicture}\\ [2ex]
		\begin{tikzpicture}[x=0.4cm,y=0.4cm]
			\draw[xstep=1,ystep=1,ultra thin](-6,0)(25,1);
			\draw[](-3,0.5)node{\small $\sigma^*_{no-idle}$};
			\draw[](-0.4,0)node{\small $0$};
			\draw[](9.5,0.5)node{\small $\ell$};
			\draw[](10.5,0.5)node{\small $k$};
			\draw[](11.5,0.5)node{\small $h$};
			\draw[](8.5,0.5)node{\small $j$};
			\draw[](4,0.5)node{\small $i$};
			\draw[](20,0.5)node{\small $\overline {U}=2,\,\overline \Delta=0$};
			\draw[->](0,0)--(14,0)node[right]{\small $t$};
			\draw[](0,1)rectangle(8,0)node[below]{\small 8};
			\draw[](8,1)rectangle(9,0)node[below]{\small 9};
			\draw[](9,1)rectangle(10,0)node[below]{\small 10};
			\draw[](10,1)rectangle(11,0)node[below]{\small 11};
			\draw[](11,1)rectangle(12,0)node[below]{\small 12};
		\end{tikzpicture}\\ [2ex]
		\begin{tikzpicture}[x=0.4cm,y=0.4cm]
			\draw[xstep=1,ystep=1,ultra thin](-6,0)(25,1);
			\draw[](-2,0.5)node{\small $\sigma^*_{idle}$};
			\draw[](-0.4,0)node{\small $0$};
			\draw[](0.5,0.5)node{\small $k$};
			\draw[](1.5,0.5)node{\small $h$};
			\draw[](3.5,0.5)node{\small $j$};
			\draw[](9,0.5)node{\small $i$};
			\draw[](4.5,0.5)node{\small $l$};
			\draw[](20,0.5)node{\small $\overline {U}=1,\,\overline \Delta=5$};
			\draw[](1,-0.5)node{\small $1$};
			\draw[](2,-0.5)node{\small $2$};
			\draw[](4,-0.5)node{\small $4$};
			\draw[](5,-0.5)node{\small $5$};
			\draw[](14,-0.5)node{\small $13$};			
			\draw[->](0,0)--(15,0)node[right]{\small $t$};
			\draw[-](0,0)--(0,1);
			\draw[-](0,1)--(2,1);
			\draw[-](3,1)--(14,1);
			\draw[-](14,1)--(14,0);
			\draw[-](1,1)--(1,0);
			\draw[-](2,0)--(2,1);
			\draw[-](3,1)--(3,0);
			\draw[-](4,1)--(4,0);
			\draw[-](5,1)--(5,0);
		\end{tikzpicture}
		\caption{Idle time insertion for the $1|\overline \Delta\leq\epsilon|\overline{U}$ and $1|\overline \Delta\leq\epsilon|\overline{T}$ problems}
		\label{fig:ex_deltasum_UT}
	\end{figure}
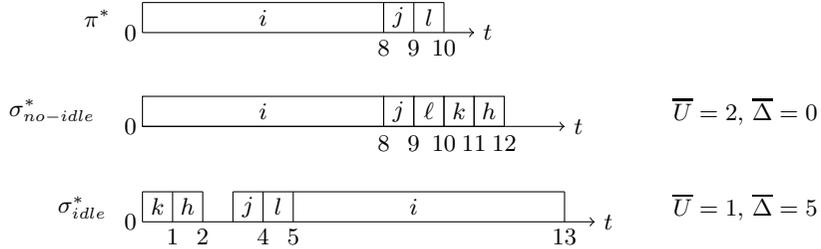
\end{proof}

Table \ref{tb:idleness} summarizes the above results: for each couple of objective function and disruption constraint, we indicate \textit{mit} (\textit{machine idle time}) when inserting machine idle times may be necessary to compute optimal solutions. We indicate \textit{n-mit} (\textit{no machine idle time}) for problems whose optimal solutions are without inserted idle times.
\begin{table}[ht]
	\centering
	\begin{tabular}{p{1cm}|p{1cm}p{1cm}}
		$f$ &$\Delta_{max}$ & $\overline \Delta$ \\[1ex]
		\hline 
		$L_{max}$   &n-mit  &mit\\[1ex]
		
		$\overline C$ &n-mit &n-mit\\[1ex]
		$\overline {wC}$ &n-mit  &mit\\[1ex]
		$\overline U$ &mit  &mit\\[1ex]
		$\overline T$   &mit  &mit\\
		\hline		
	\end{tabular}
	\caption{Problem classification w.r.t. the insertion of machine idle times.}
	\label{tb:idleness}
\end{table}

\section{Complexity results}

In this section, we recall complexity results of rescheduling problems from the literature and provide proofs for some open problems. Table \ref{tb:complexity} summarizes the current state of known complexity results. The first two rows contain the results stated by \cite{HP:04}. The complexity of problem $1|\Delta_{max}\leq \epsilon|\overline{wC}$ is due to \cite{lendl}.
The complexity status in bold are these established in this section.\\
\begin{table}[ht]
	\centering
	\begin{tabular}{l|cc}
		$f$&$\Delta_{max}$ & $\overline \Delta$ \\[2ex]
		\hline 
		$L_{max}$   & $O(n + n_N log(n_N) )$  &\begin{tabular}{c}
			strongly\\$\cal NP$-hard
		\end{tabular}\\[2ex]
		$\overline C$ & $O(n + n_N log(n_N) )$ &\begin{tabular}{c}
			$\cal NP$-hard
		\end{tabular}\\[2ex]
		$\overline {wC}$ &\begin{tabular}{c}
			$\cal NP$-hard
		\end{tabular}  &\begin{tabular}{c}\textbf{$\cal NP$-hard}
		\end{tabular} \\[2ex]
		$\overline U$ & \begin{tabular}{c}\textbf{$\cal NP$-hard}
		\end{tabular}&  \begin{tabular}{c}			\textbf{strongly}\\\textbf{$\cal NP$-hard}
		\end{tabular}\\[2ex]
		$\overline T$   & \begin{tabular}{c}	\textbf{$\cal NP$-hard}
		\end{tabular} & \begin{tabular}{c}			\textbf{strongly}\\\textbf{$\cal NP$-hard}
		\end{tabular}\\
		\hline		
	\end{tabular}
	\caption{\centering Computational complexity of rescheduling problems.}
	\label{tb:complexity}
\end{table}

\begin{thm}
	Problem $1|\Delta_{max}\leq \epsilon|\overline U$ is $\cal NP$-hard.
\end{thm}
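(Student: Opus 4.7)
My plan is to prove $\mathcal{NP}$-hardness by a polynomial-time reduction from \textsc{Partition}. Given positive integers $a_1,\dots,a_n$ summing to $2B$, I would construct an instance of $1|\Delta_{max}\leq\epsilon|\overline U$ that attains one threshold value $k^{\star}$ of $\overline U$ precisely when \textsc{Partition} has a YES answer, and strictly more otherwise. The instance would consist of $n$ ``partition'' new jobs $N_i$ with $p_{N_i}=a_i$, a small number of old jobs playing the role of \emph{anchors}, one or two auxiliary jobs forming a penalty mechanism, and a disruption bound $\epsilon=B$.

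The guiding structural idea is to pick an old job $J$ with a tight due date $d_J = C_J(\pi^*)$, so that every unit of new-job processing inserted before $J$ both consumes the disruption budget and contributes to $J$'s lateness. Together with $\Delta_{max}\leq B$ this carves out a pre-$J$ window of capacity exactly $B$; by setting the due dates of the partition jobs appropriately, a new job $N_i$ is on-time if and only if it lies in that window. Were this the whole story the problem would boil down to a max-cardinality subset-under-capacity selection and be trivially polynomial, so I would additionally introduce a penalty ingredient — for instance, an auxiliary job (new, of size $B$, with a due date positioned at the end of the pre-$J$ window, or a downstream old job whose lateness is triggered by a specific cumulative load) whose on-time status forces the pre-$J$ window to be filled \emph{exactly} to $B$. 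A subset of partition jobs summing strictly below $B$ would then force one more late job.

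Once the instance is fixed, the equivalence would be established in both directions. Forward: if $S\subseteq\{1,\dots,n\}$ satisfies $\sum_{i\in S}a_i=B$, scheduling $\{N_i:i\in S\}$ in the pre-$J$ window, then $J$ (and the penalty job in its feasible position), and finally the remaining $N_i$'s yields a schedule with $\Delta_{max}=B$ and $\overline U=k^{\star}$. Backward: any schedule with $\overline U\leq k^{\star}$ must, by construction, fill the pre-$J$ window exactly to $B$, yielding a valid \textsc{Partition}.

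The main obstacle is the backward direction: since $\overline U$ has unit weights, the instance is prone to collapse into the easy cardinality-knapsack version unless the anchor position together with the penalty mechanism is chosen so that any strictly shorter pre-$J$ sum is provably penalised by one additional late job. Ruling out alternative configurations — in particular schedules that re-sequence old jobs (which the statement does not forbid for this objective, unlike the complementary setting of Property~\ref{prop::deltamax}) or that exploit unusual interleavings of old and new jobs — is the most delicate step of the argument.
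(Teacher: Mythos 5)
Your high-level strategy coincides with the paper's: a reduction from 2-PARTITION in which the partition elements become new jobs, a few old jobs act as anchors, and the disruption budget caps the load that can be placed before a long old job at $B$. You also correctly diagnose the crux: without an extra mechanism forcing the pre-anchor window to be filled to \emph{exactly} $B$, the instance degenerates into an easy cardinality selection. However, the proposal stops precisely at that crux. The penalty gadget is only gestured at (an auxiliary new job of size $B$ due at the end of the window, or a downstream old job triggered by cumulative load), and neither candidate is constructed or verified. The first candidate seems unworkable as stated: a new job of size $B$ that must complete within a pre-window of capacity $B$ would by itself exhaust that window, leaving no room for the partition jobs it is supposed to certify. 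So, as written, there is a genuine gap where the heart of the proof should be, and you acknowledge as much by calling the backward direction the most delicate step without supplying it.

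The paper closes this gap with a device you did not consider: it exploits the fact that $\Delta_j=|C_j(\sigma)-C_j(\pi^*)|$ penalises moving an old job \emph{earlier} just as much as moving it later. Concretely, it uses two old jobs, a long one $i_1$ with $p_{i_1}=d_{i_1}=2B+5$ followed by a unit job $i_2$ with $d_{i_2}=2B+6$; two unit new jobs with due date $2$, which must occupy the first two positions on pain of two tardy jobs and thereby force $i_1$ to be the single tardy job allowed by the threshold $\overline U\leq 1$; and partition jobs with due date $4B+8$ equal to the total processing time, which forbids idle insertion. With $\epsilon=B+3$, the constraint $\Delta_{i_1}\leq\epsilon$ gives the upper bound (at most $B$ of partition load may precede $i_1$), while the constraint $\Delta_{i_2}\leq\epsilon$ applied to $i_2$, which must be pulled \emph{ahead} of $i_1$ to stay on time, gives the lower bound ($C_{i_2}\geq B+3$, hence at least $B$ of partition load precedes it). Note also that the old jobs are deliberately re-sequenced in the intended solution, so your stated worry about having to rule out re-sequencing points in the wrong direction. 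This two-sided use of the absolute deviation is the missing ingredient; your sketch would need it, or an equivalent device, for the backward direction to go through.
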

\begin{proof}
	Consider the 2-PARTITION problem, where a set $S$ of elements $a_1, \dots, a_n$ is given and the question is to determine whether there exists or not a partition into two subsets $S_1, S_2$, such that $\sum_{i\in S_1}a_i=\sum_{i\in S_2}a_i=B$. The problem is known to be NP-complete in the weak sense.
	
	To show that the rescheduling problem is $\cal NP$-hard, we show that 2-PARTITION can be reduced to the decision version of the rescheduling problem denoted by $1|\Delta_{max}\leq \epsilon,\,\overline{U}\leq 1|-$.
	Consider the following instance of the decision version of the rescheduling problem:\\
	$J^O =\{i_1, i_2 \}$, with
	\begin{itemize}
		\item $p_{i1}=2B+5$
		\item $p_{i2}=1$
		\item $d_{i1}=2B+5$
		\item $d_{i2}=2B+6$
	\end{itemize}
	$J^N = \{j_1, j_2\} \cup J$, where $J$ is a set of $n$ jobs, with
	\begin{itemize}
		\item $p_{j1}=p_{j2}=1$
		\item $d_{j1}=d_{j2}=2$
		\item $p_j=a_j,\, \forall j\in J$
		\item $d_j = 4B+8,\, \forall j\in J$
	\end{itemize}
	Finally, we set $\epsilon=B+3$.
	
	Schedule $\pi^*$ is constructed as follows to have both old jobs on time.
	\begin{figure}[h]
		\centering
		\begin{tikzpicture}[x=0.4cm,y=0.4cm]
			\draw[xstep=1,ystep=1,ultra thin](-3,0)(22,1);
			\draw[](-2,0.5)node{ $\pi^*$};
			\draw[](-0.7,0)node{ $0$};
			\draw[](12,-0.7)node{ $2B+6$};
			\draw[-](12,0.2)--(12,-0.2);
			\draw[](5,0.5)node{ $i_1$};
			\draw[](11.5,0.5)node{ $i_2$};
			\draw[->](0,0)--(15,0)node[right]{};
			\draw[-](0,0)--(0,1);
			\draw[-](0,1)--(12,1);
			\draw[-](11,0)--(11,1);
			\draw[-](12,0)--(12,1);
		\end{tikzpicture}
	\end{figure}
	\newline If there exists a solution to 2-PARTITION, there exist two index sets $J_1$ and $J_2$, such that $\sum_{i\in J_1}a_i=\sum_{i\in J_2}a_i=B$.
	\begin{figure}[h]
		\centering
		\begin{tikzpicture}[x=0.4cm,y=0.4cm]
			\draw[xstep=1,ystep=1,ultra thin](-3,0)(22,1);
			\draw[](-2,0.5)node{$\sigma$};
			\draw[](-0.7,0)node{ $0$};
			\draw[](3.5,-0.7)node{ $B$};
			\draw[](18.5,-0.7)node{ $B$};
			\draw[](0.5,0.5)node{ $j_1$};
			\draw[](1.5,0.5)node{ $j_2$};
			\draw[](3.5,0.5)node{ $j\in J_1$};
			\draw[](5.5,0.5)node{ $i_2$};
			\draw[](11.5,0.5)node{ $i_1$};
			\draw[](18.5,0.5)node{ $j \in J_2$};
			
			\draw[->](0,0)--(21,0)node[right]{};
			\draw[-](0,0)--(0,1);
			\draw[-](0,1)--(20,1);
			\draw[-](1,0)--(1,1);
			\draw[-](2,0)--(2,1);
			\draw[-](5,0)--(5,1);
			\draw[-](6,0)--(6,1);
			\draw[-](17,0)--(17,1);	
			\draw[-](20,0)--(20,1);
		\end{tikzpicture}
	\end{figure}
	Then, the following schedule $\sigma$ solves the rescheduling problem $1|\Delta_{max}\leq \epsilon|\overline{U}\leq 1$,
	where job $i_1$ is the only tardy job. 
	Also, $$\Delta_{i1}=(2B+5)+2+1+B-(2B+5)=B+3\leq \epsilon$$ and $$\Delta_{i2}=(2B+5)+1-2-1-B=B+3\leq \epsilon$$.\\
	
	On the other side, if the rescheduling problem has a solution with $\overline{U}\leq 1$, the two jobs $j_1$ and $j_2$ must be scheduled at the beginning positions of the schedule, otherwise at least two jobs will become tardy. But then, $i_1$ and $i_2$ must be swapped and, to meet the constraint on the disruption, $i_2$ must complete not earlier than $B+3$ and $i_1$ not later than $3B+8$. Given that, since the due date of jobs $j,\, \forall j \in J$, equals the sum of all jobs, the solution cannot contain idle times and the only way to get that is to have a partition of the jobs in two sets $J_1, J_2$ such that $\sum_{i\in J_1}p_i = \sum_{i\in J_2}p_i = B$. 
	
\end{proof} 

\begin{thm}
	Problems $1|\overline \Delta\leq \epsilon|\overline U$ and $1|\overline \Delta\leq \epsilon|\overline T$ are strongly $\cal NP$-hard.
\end{thm}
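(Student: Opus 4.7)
The plan is to show strong \textbf{NP}-hardness by polynomial reduction from \textsc{3-Partition}, which is strongly \textbf{NP}-complete. An instance consists of $3n$ positive integers $a_1,\ldots,a_{3n}$ with $\sum_l a_l = nB$ and $B/4 < a_l < B/2$, and asks whether $\{1,\ldots,3n\}$ can be partitioned into $n$ triples each summing to $B$. The same construction will serve both decision problems $1|\overline\Delta\leq\epsilon,\overline U\leq 0|-$ and $1|\overline\Delta\leq\epsilon,\overline T\leq 0|-$.

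Given such an instance, I build the following rescheduling instance. Let $J^O=\{i_1,\ldots,i_n\}$ with $p_{i_k}=1$ and $d_{i_k}=k(B+1)$, so that $\pi^*=(i_1,\ldots,i_n)$ has $C_{i_k}(\pi^*)=k$ and is optimal with $\overline U(\pi^*)=\overline T(\pi^*)=0$. Let $J^N$ contain $3n$ \emph{item} jobs $j_l$ with $p_{j_l}=a_l$ and common due date $n(B+1)$, together with a small set of auxiliary \emph{forcing} new jobs whose tight due dates pin down the overall shape of any feasible schedule. Set $\epsilon=Bn(n+1)/2$. The forward direction is immediate: given a \textsc{3-Partition} solution $(T_1,\ldots,T_n)$, the schedule $T_1\,i_1\,T_2\,i_2\,\cdots\,T_n\,i_n$ (with forcing jobs in their designated positions) gives $C_{i_k}=k(B+1)=d_{i_k}$ and every item on time, so $\overline U=\overline T=0$, while $\Delta_{i_k}=kB$ yields $\overline\Delta=Bn(n+1)/2=\epsilon$.

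For the converse, assume $\sigma$ is feasible. By Property \ref{prop::seq_old} together with a pairwise-swap argument in the spirit of Property \ref{prop::deltamax}, the jobs $i_k$ may be taken in their $\pi^*$ order with no inserted idle times. Let $S_k$ denote the total item processing preceding $i_k$; the deadline $d_{i_k}$ combined with $C_{i_k}(\sigma)=k+S_k$ yields $S_k\leq kB$, and the tight due dates of the forcing jobs are designed to impose the matching lower bound $S_k\geq kB$, so that $S_k=kB$ for every $k$. Consequently the items between $i_{k-1}$ and $i_k$ sum to exactly $B$, and since each $a_l\in(B/4,B/2)$ each such block must contain exactly three items, yielding a \textsc{3-Partition} solution.

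The main obstacle is precisely the lower bound $S_k\geq kB$. Without the forcing jobs, the "push-to-last-slot" schedule in which all items are piled just before $i_n$ is feasible with $\overline\Delta=nB<\epsilon$ and $\overline U=\overline T=0$, defeating the reduction. The delicate part of the construction is therefore to design the auxiliary forcing new jobs so that their deadlines can only be met when the item work is evenly distributed across the $n$ slots; a natural choice introduces $n-1$ forcing jobs with processing times and deadlines shaped so that each must occupy a specific interval in the schedule, thereby forbidding the concentration of items into any single slot.
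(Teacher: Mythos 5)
There is a genuine gap: your reduction is not actually constructed. The entire burden of the converse direction rests on the ``auxiliary forcing new jobs'' that are supposed to impose $S_k\geq kB$, and you never specify their processing times or due dates — you only assert that ``a natural choice'' of $n-1$ such jobs would work. As you yourself observe, without them the push-to-last-slot schedule is feasible and the reduction fails, so the unspecified gadget is the whole proof. It is also not clear the gadget can be built the way you sketch: forcing jobs are \emph{new} jobs, so they contribute nothing to $\overline\Delta$, and a job with a tight due date can always be met by scheduling it \emph{earlier} than its designated interval; pinning a job to a specific interval (rather than merely bounding it from above) requires more than a deadline. A second problem is your appeal to Property \ref{prop::seq_old} and ``a pairwise-swap argument in the spirit of Property \ref{prop::deltamax}'' to assume the old jobs keep their $\pi^*$ order with no idle time: Property \ref{prop::deltamax} is proved only for the $\Delta_{max}$ criterion, Property \ref{prop:sumdelta} shows that under $\overline\Delta$ re-sequencing of old jobs can be strictly necessary, and Proposition \ref{prop:idle_prbs} shows that $1|\overline\Delta\leq\epsilon|\overline U$ and $1|\overline\Delta\leq\epsilon|\overline T$ may require inserted idle times. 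So the structural assumptions you invoke are contradicted by the paper's own results for exactly these problems.

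The paper closes this gap by a different mechanism: instead of forcing jobs, it introduces $t$ additional \emph{old} jobs of length $ty$ that sit in the slots where the partition items must go, interleaved with $t$ unit old jobs acting as markers. In any zero-tardiness schedule the long old jobs must be displaced to the end (their common due date and the items' due date force this), and the disruption they and any displaced markers incur is counted in $\overline\Delta$. The budget $\epsilon=t^3y+t(t+1)/2$ equals exactly the disruption of the intended schedule, and the item sizes are scaled by $t$ (i.e.\ $p_j=ta_{j-2t}$) so that displacing even one unit marker adds at least $ty-h+1$ to the total disruption and violates the budget. In other words, the lower bound you are missing is enforced through the disruption budget on old jobs, not through due dates of new jobs. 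You would need either to adopt such a device or to exhibit concrete forcing jobs and prove they cannot be met by any uneven distribution; as written, the argument does not establish the theorem.
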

\begin{proof}
	Consider the 3-PARTITION problem, where a set $S$ of elements $a_1,\dots,a_{3t}$ is given. A solution to the problem exists if the elements can be partitioned into sets $S_1,\dots,S_{t}$, such that for every set $S_j, j=1,\dots,t$, there is an integer $y$, such that $\sum_{i\in S_j}a_i=y$ and $|S_j|=3$. The problem is known to be NP-complete in the strong sense.
	
	We show that the rescheduling problem $1|\overline \Delta\leq \epsilon|\overline{U}$ is strongly $\cal NP$-hard, we follow the line of the proof of  \cite{HP:04}. Given any instance of 3-PARTITION, we show that the problem reduces to the decision version of $1|\overline \Delta\leq \epsilon|\overline{U}$ denoted by $1|\overline \Delta\leq \epsilon,\,\overline{U}\leq 0|-$.
	
	Consider the following instance of the rescheduling problems with $2t$ old jobs and $3t$ new jobs.
	\begin{itemize}
		\item $J^O=\{1,\dots,2t\}$
		\item $J^N =\{2t+1,\dots,5t\}$
		\item $p_j=1, j=1,\dots,t$
		\item $p_j=ty, j=t+1,\dots,2t$
		\item $d_j=t(2ty+1), j=t+1,\dots,2t$
		\item $p_j=ta_{j-2t}, j=2t+1,\dots,5t$
		\item $d_j=t^2y+t-1, j=2t+1,\dots,5t$
		\item $\epsilon=t^3y+t(t+1)/2$
	\end{itemize}
	
	It is shown below, that a solution to $1|\overline \Delta\leq \epsilon,\,\overline{U}\leq 0|-$ exists if and only if a solution of 3-PARTITION exists.
	
	Schedule $\pi^*$ is constructed as follows.
	\begin{figure}[h]
		\centering
		\begin{tikzpicture}[x=0.3cm,y=0.4cm]
			\draw[xstep=1,ystep=1,ultra thin](-3,0)(20,1);
			\draw[](-2,0.5)node{ $\pi^*$};
			\draw[](-0.7,0)node{$0$};
			\draw[](7,-0.7)node{$ty+1$};
			\draw[-](7,0.2)--(7,-0.2);
			\draw[](17,-0.7)node{$t(ty+1)$};
			\draw[-](17,0.2)--(17,-0.2);
			\draw[](3,0.5)node{$t+1$};
			\draw[](6.5,0.5)node{$1$};
			\draw[](8.5,0.5)node{$\dots$};
			\draw[](13,0.5)node{$2t$};
			\draw[](16.5,0.5)node{$t$};
			\draw[->](0,0)--(20,0)node[right]{};
			\draw[-](0,0)--(0,1);
			\draw[-](0,1)--(7,1);
			\draw[-](10,1)--(17,1);
			\draw[-](6,0)--(6,1);
			\draw[-](7,0)--(7,1);
			\draw[-](10,0)--(10,1);	
			\draw[-](16,0)--(16,1);
			\draw[-](17,0)--(17,1);
		\end{tikzpicture}
	\end{figure}
	\newline If a solution to 3-PARTITION exists, there exist index sets $J_1,J_2,\dots,J_t$, such that for any $J_j$, $\sum_{i\in J_j}a_i=y$ and $|J_j|=3$. Then,
	the following schedule solves $1|\overline \Delta\leq \epsilon,\,\overline U\leq 0|-$.
	\begin{figure}[h]
		\centering
		\begin{tikzpicture}[x=0.3cm,y=0.4cm]
			\draw[xstep=1,ystep=1,ultra thin](-3,0)(33,1);
			\draw[](-2,0.5)node{$\sigma$};
			\draw[](-0.7,0)node{$0$};
			\draw[](7,-0.7)node{$ty+1$};
			\draw[-](7,0.2)--(7,-0.2);
			\draw[](17,-0.7)node{$t(ty+1)$};
			\draw[-](17,0.2)--(17,-0.2);
			\draw[](32,-0.7)node{$t(2ty+1)$};
			\draw[-](32,0.2)--(32,-0.2);
			
			\draw[](3,0.5)node{$j\in J_1$};
			\draw[](6.5,0.5)node{$1$};
			\draw[](8.5,0.5)node{$\dots$};
			\draw[](13,0.5)node{$j\in J_t$};
			\draw[](16.5,0.5)node{$t$};
			\draw[](20,0.5)node{$t+1$};
			\draw[](24.5,0.5)node{$\dots$};
			\draw[](29,0.5)node{$2t$};
			\draw[->](0,0)--(33,0)node[right]{};
			\draw[-](0,0)--(0,1);
			\draw[-](0,1)--(7,1);
			\draw[-](10,1)--(23,1);
			\draw[-](26,1)--(32,1);
			\draw[-](6,0)--(6,1);
			\draw[-](7,0)--(7,1);
			\draw[-](10,0)--(10,1);	
			\draw[-](16,0)--(16,1);
			\draw[-](17,0)--(17,1);
			\draw[-](23,0)--(23,1);
			\draw[-](26,0)--(26,1);
			\draw[-](32,0)--(32,1);		
		\end{tikzpicture}
	\end{figure}
	\newline On the other hand, in order to have a schedule with $\overline U\leq 0$, all jobs $t+1,\dots,2t$ have to be scheduled after the new jobs, otherwise one of them would be late. 
	Suppose that exactly $h$ jobs of the set $\{1,\dots,t\}$ are scheduled after the first job of $\{t+1,\dots,2t\}$ in $\sigma^*$. Each such job completes in $\sigma^*$ no earlier than time $t^2y+ty+(t-h)+1$, whereas it completes in $\pi^*$ no later than time $t(ty+1)$. Moreover, jobs $t+1, \dots,2t$ are completed at times $ty,(2ty+1),\dots,(t^2y+t-1)$ in $\pi^*$ and are preceded by all new jobs and $t-h$ jobs of $\{1,\dots,t\}$ and therefore are completed no earlier than times $(t^2y+(t-h)+ty),(t^2y+(t-h)+2ty),\dots,(t^2y+(t-h)+t^2y)$ in $\sigma^*$. Computing the total disruption, we obtain:\\
	$\overline \Delta \geq t^3y+t(t-h)-t(t-1)/2+h(ty-h+1)$\\
	$=\epsilon+h(ty-h+1)$.\\
	Since $y\geq 3$, by definition of 3-PARTITION, we conclude that $h=0$. Also, in order to have the disruption constraint satisfied, jobs $1,\dots,t$ must be scheduled on time with respect to their initial completion times and the new jobs in sets $S_1,\dots,S_t$ between them as shown in the figure above. The partition of the new jobs solves 3-PARTITION.
	
	The same reduction also works for problem $1|\overline \Delta\leq \epsilon|\overline{T}$ and shows that 3-PARTITION reduces to the decision problem $1|\overline \Delta\leq \epsilon,\,\overline{T}\leq 0|-$.
	
	In order to conclude the proof, we show that the problem is not a number problem. The first condition requires the existence of a polynomial $\lambda$ of both the size of the rescheduling problem and the size of 3-PARTITION that bounds the largest number of the input, as follows:
	$$ \max(\max_{j\in J^O \cup J^N}(p_j,d_j),\epsilon) \leq 2t^3y+t^2+t \leq \lambda(2(|J^O|+|J^N|)+1,t) \leq \lambda(10t+1,t)$$
	Since 3-PARTITION is not a number problem, there exists a polynomial $\lambda'$ such that $y \leq \lambda'(3t)$ which implies the existence of $\lambda$. \\
	The second condition requires the existence of a polynomial $\rho$  of the length of the size of the rescheduling instance that bounds the size of 3-PARTITION, as follows:
	$$ t \leq \rho(10t+1)$$
	Clearly, the statement holds and shows that the rescheduling problem is not a number problem.
\end{proof}

\begin{thm}
	Problem $1|\overline \Delta\leq \epsilon|\overline{wC}$ is $\cal NP$-hard.
\end{thm}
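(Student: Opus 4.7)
The plan is to reduce from 2-PARTITION, in the same weak NP-hardness style used earlier in this section for $1|\Delta_{max}\leq\epsilon|\overline U$ and by \cite{HP:04} for $1|\Delta_{max}\leq\epsilon|\overline{wC}$. Given elements $a_1,\dots,a_n$ summing to $2B$, I would construct a rescheduling instance with a constant number of old jobs of large weight acting as anchors, $n$ partition new jobs of processing times $p_j=a_j$, and possibly a couple of auxiliary new jobs of large weight that are forced to the beginning of the schedule (as in the proof of Theorem 1). The original schedule $\pi^*$ would place the anchor old jobs consecutively so that their completion times serve as reference points against which the disruption is measured.

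I would choose the threshold $\epsilon$ so that any feasible schedule must insert partition new jobs whose total processing time before a distinguished old job equals exactly $B$. Weights on the partition jobs would then be calibrated so that, under Smith's WSPT rule within each of the two resulting groups, the objective $\overline{wC}$ attains a prescribed target $K$ if and only if the two groups correspond to a balanced partition. The forward direction amounts to exhibiting the schedule obtained by placing one half of the partition elements (WSPT-ordered) before the distinguished old job and the other half (again WSPT-ordered) after it. The reverse direction requires showing that any feasible schedule of value at most $K$ induces a 2-PARTITION solution: the disruption budget rules out schedules in which more than $B$ units of partition processing time are placed before the distinguished old job, while the anchor weights force at least $B$ units to be placed before it to avoid a large contribution to $\overline{wC}$.

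The main obstacle will be the joint calibration of processing times, weights, anchor positions, and $\epsilon$. Unlike the $\Delta_{max}$ version, the constraint $\overline\Delta\leq\epsilon$ aggregates contributions from all old jobs, so each new job inserted between two anchor old jobs charges the budget for every anchor to its right; the budget must therefore be sized tightly enough to rule out degenerate schedules (for example with all new jobs placed last, which is disruption-free but heavy for $\overline{wC}$) and loose enough to admit the intended partition schedule. Bracketing the partition jobs between two anchors, as in the Theorem 1 construction, seems the natural way to control the accumulation of disruption, and the WSPT interchange argument then gives the strict suboptimality of any unbalanced split. The precise numerical constants can be derived by direct counting along the lines of the \cite{HP:04} construction.
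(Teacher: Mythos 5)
Your proposal is a plan for a reduction, not a proof: the construction is never instantiated. You do not specify the number of old jobs, their processing times, weights, or due positions, the auxiliary new jobs, the target value $K$, or the threshold $\epsilon$, and you explicitly defer the ``joint calibration'' of all of these to ``direct counting'' that is not carried out. That calibration is precisely where the difficulty lies for the $\overline\Delta$ criterion, as you yourself note: every new job inserted before an anchor charges the budget once per old job to its right, so the feasibility analysis in the reverse direction (showing that any schedule with $\overline{wC}\leq K$ and $\overline\Delta\leq\epsilon$ yields a balanced partition) cannot be waved through by analogy with the $\Delta_{max}$ case. As it stands, neither direction of the equivalence is verified, so there is a genuine gap.

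Beyond the incompleteness, you have missed a much shorter route, which is the one the paper takes: restrict attention to instances with a \emph{single} old job $j$. Then $\overline\Delta=\Delta_j=\Delta_{max}$, so $1|\overline\Delta\leq\epsilon|\overline{wC}$ contains $1|\Delta_{max}\leq\epsilon|\overline{wC}$ with one old job as a special case, and the latter is already known to be $\cal NP$-hard even in that restricted setting by \cite{lendl}. No new reduction is needed at all. If you do want a self-contained reduction from 2-PARTITION, the cleanest way to sidestep the budget-accumulation issue you identified is exactly this observation: with one old job the two disruption criteria coincide, and the known construction for the $\Delta_{max}$ version transfers verbatim.
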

\begin{proof}
	Consider the special case with one old job $j$. Clearly, with one old job we have $\overline \Delta =\Delta_{j}=\Delta_{max}$. Since problem $1|\Delta_{max}\leq \epsilon|\overline{wC}$ has been proven to be $\cal NP$-hard in the weak sense by \cite{lendl}, even when there is a single old job, problem $1|\overline \Delta\leq \epsilon|\overline{wC}$ is $\cal NP$-hard. 
\end{proof}

\begin{thm}
	Problem $1|\Delta_{max}\leq \epsilon|\overline T$ is $\cal NP$-hard.
\end{thm}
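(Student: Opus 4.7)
The plan is to reduce from the classical single-machine total tardiness problem $1||\overline T$, which is a well-known $\cal NP$-hard problem in the weak sense. The idea is to embed any $1||\overline T$ instance into a rescheduling instance by augmenting it with a single ``transparent'' old job that can always be placed last without contributing to either the total tardiness or the maximum disruption, so that the rescheduling problem inherits the hardness of $1||\overline T$.

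Given a $1||\overline T$ instance with jobs $j_1,\dots,j_n$ of processing times $p_1,\dots,p_n$ and due dates $d_1,\dots,d_n$, set $P=\sum_{i=1}^n p_i$ and build the rescheduling instance with $J^O=\{j_0\}$, $p_{j_0}=P$, $d_{j_0}=2P$; $J^N=\{j_1,\dots,j_n\}$ keeping the original parameters; disruption bound $\epsilon=P$; and initial schedule $\pi^*=(j_0)$. Since $C_{j_0}(\pi^*)=P<d_{j_0}$ and $j_0$ is the only old job, $\pi^*$ is trivially optimal for $\overline T$.

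I would then show that the optimal $\overline T$ of the rescheduling instance equals the optimal $\overline T$ of the underlying $1||\overline T$ instance, by a pair of schedule transformations. For one direction, appending $j_0$ at the end of any $1||\overline T$ schedule of value $K$ yields $C_{j_0}=2P=d_{j_0}$, so $T_{j_0}=0$, together with $\Delta_{j_0}=P=\epsilon$, producing a feasible rescheduling schedule of value $K$. For the converse, deleting $j_0$ from any feasible rescheduling schedule (leaving the remaining jobs at their original start times) yields a valid single-machine schedule on $\{j_1,\dots,j_n\}$ whose completion times are no larger, hence whose total tardiness is at most $\overline T(\sigma)-T_{j_0}(\sigma)\le K$. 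A polynomial-time algorithm for $1|\Delta_{max}\le\epsilon|\overline T$ would therefore decide $1||\overline T$, contradicting its $\cal NP$-hardness.

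There is no substantial technical obstacle; the only point requiring care is the coupled tuning of $p_{j_0}$, $d_{j_0}$, and $\epsilon$ so that $j_0$ is always on time and the disruption bound is exactly saturated when $j_0$ is pushed to the end, and the triple $(P, 2P, P)$ achieves this. An alternative would be to mimic the 2-PARTITION construction used in the earlier theorem for $1|\Delta_{max}\le\epsilon|\overline U$, but this would require delicate extra due-date tuning to rule out schedules that put $i_1$ before $i_2$ and distribute a small per-job tardiness across many jobs, so the single-old-job reduction above is considerably cleaner and in the same spirit as the proof of Theorem~3 for $\overline{wC}$.
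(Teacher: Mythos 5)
Your reduction is correct, but it takes a genuinely different and more elaborate route than the paper. The paper's proof is a one-liner: it observes that $1||\overline T$ is the special case of $1|\Delta_{max}\leq \epsilon|\overline T$ obtained by taking $\epsilon=+\infty$ (equivalently, any $\epsilon$ at least the total processing time), so the disruption constraint is vacuous and hardness is inherited immediately. You instead construct a nontrivial instance: a single dummy old job $j_0$ with $p_{j_0}=P$, $d_{j_0}=2P$, the $1||\overline T$ jobs as new jobs, and a finite, exactly saturated budget $\epsilon=P$. Your two transformations are sound; the only point you should state explicitly is that in the forward direction you take a no-idle optimal schedule of the $1||\overline T$ instance (which exists since tardiness is regular), as otherwise $C_{j_0}$ could exceed $2P$ and violate both $d_{j_0}$ and $\epsilon$. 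What your version buys is a marginally stronger statement -- hardness persists even with a single old job and a finite, tight disruption bound, in the same spirit as the paper's Theorem 3 for $\overline{wC}$ -- at the cost of extra bookkeeping; what the paper's version buys is brevity, since setting $\epsilon$ large enough already settles the claim as stated.
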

\begin{proof}
	Consider the well-known generalization of the problem, i.e. the scheduling problem $1||\overline T$, which is known to be weakly $\cal NP$-hard. It is enough to take $\epsilon=+\infty$ to show that $1||\overline T$ is a particular case of $1|\Delta_{max}\leq \epsilon|\overline T$, i.e. the rescheduling problem is $\cal NP$-hard. 
\end{proof}

\section{Conclusions}
We considered the set of rescheduling problems for new orders on a single machine, that arise from the combination of several scheduling objective functions and two disruption criteria. The new jobs have to be integrated into the original optimal schedule of old jobs to minimize the scheduling objective function, subject to a constraint on the disruption. We considered the most commonly-used scheduling objective functions, i.e. total completion time, total weighted completion time, total number of late jobs, total tardiness and maximum lateness, and two disruption criteria, i.e. the total and maximum absolute time deviation of old jobs.

The work provides a complete classification of problems according to their complexity and need for machine idle time insertion.
In particular, the different behaviour of the two disruption criteria and the relevance of whether or not to preserve the original order of the old jobs with respect to the insertion of idle time are highlighted. 

Future research should focus on the development of algorithms for problems that may require the insertion of idle times. 

\bibliographystyle{apalike}
\bibliography{Resched}

\begin{thebibliography}{}

\bibitem[Aytug et~al., 2005]{aytug_executing_2005}
Aytug, H., Lawley, M.~A., McKay, K., Mohan, S., and Uzsoy, R. (2005).
\newblock Executing production schedules in the face of uncertainties: {A}
  review and some future directions.
\newblock {\em European Journal of Operational Research}, 161(1):86--110.

\bibitem[Chrétienne and Sourd, 2003]{chretienne_pert_2003}
Chrétienne, P. and Sourd, F. (2003).
\newblock {PERT} scheduling with convex cost functions.
\newblock {\em Theoretical Computer Science}, 292(1):145--164.

\bibitem[Fang et~al., 2023]{fang_rescheduling_2023}
Fang, K., Luo, W., Pinedo, M.~L., Jin, M., and Lu, L. (2023).
\newblock Rescheduling for new orders on a single machine with rejection.
\newblock {\em Journal of the Operational Research Society}, pages 1--15.

\bibitem[Garey et~al., 1988]{GTW88}
Garey, M.~R., Tarjan, R.~E., and Wilfong, G.~T. (1988).
\newblock One-processor scheduling with symmetric earliness and tardiness
  penalties.
\newblock {\em Mathematics of Operations Research}, 13(2):330--348.

\bibitem[Graham et~al., 1979]{GLLR:79}
Graham, R.~L., Lawler, E.~L., Lenstra, J.~K., and {Rinnooy~Kan}, A. H.~G.
  (1979).
\newblock Optimization and approximation in deterministic sequencing and
  scheduling: a survey.
\newblock {\em Annals of Discrete Mathematics}, 5:287--326.

\bibitem[Guo et~al., 2021]{guo_single-machine_2021}
Guo, Y., Huang, M., Wang, Q., and Jorge~Leon, V. (2021).
\newblock Single-machine rework rescheduling to minimize total waiting time
  with fixed sequence of jobs and release times.
\newblock {\em IEEE Access}, 9:1205--1218.

\bibitem[Hall et~al., 2007]{HLP:07}
Hall, N., Liu, Z., and Potts, C. (2007).
\newblock Rescheduling for multiple new orders.
\newblock {\em INFORMS Journal on Computing}, 19(4):633--645.

\bibitem[Hall and Potts, 2004]{HP:04}
Hall, N. and Potts, C. (2004).
\newblock Rescheduling for new orders.
\newblock {\em Operations Research}, 52(3):440--453.

\bibitem[Lendl et~al., 2023]{lendl}
Lendl, S., Pferschy, U., and Rener, E. (2023).
\newblock Rescheduling with new orders under bounded disruption.
\newblock \textit{Submitted to INFORMS Journal on Computing}.

\bibitem[Liu and Zhou, 2015]{LZ15}
Liu, L. and Zhou, H. (2015).
\newblock single-machine rescheduling with deterioration and learning effects
  against the maximum sequence disruption.
\newblock {\em International Journal of Systems Science}, 46(14):2640--2658.

\bibitem[Liu et~al., 2018]{liu_cost_2018}
Liu, Z., Lu, L., and Qi, X. (2018).
\newblock Cost allocation in rescheduling with machine unavailable period.
\newblock {\em European Journal of Operational Research}, 266(1):16--28.

\bibitem[Luo et~al., 2022]{luo_tardiness-augmented_2022}
Luo, W., Chin, R., Cai, A., Lin, G., Su, B., and Zhang, A. (2022).
\newblock A tardiness-augmented approximation scheme for rejection-allowed
  multiprocessor rescheduling.
\newblock {\em Journal of Combinatorial Optimization}, 44:690--722.

\bibitem[Pferschy et~al., 2023]{pferschy_algorithms_2022}
Pferschy, U., Resch, J., and Righini, G. (2023).
\newblock Algorithms for rescheduling jobs with a {LIFO} buffer to minimize the
  weighted number of late jobs.
\newblock {\em Journal of Scheduling}, 26:267--287.

\bibitem[Rener et~al., 2022]{rener_single_2022}
Rener, E., Salassa, F., and T'kindt, V. (2022).
\newblock Single machine rescheduling for new orders with maximum lateness
  minimization.
\newblock 144:20.

\bibitem[Teghem and Tuyttens, 2014]{TT14}
Teghem, J. and Tuyttens, D. (2014).
\newblock A bi-objective approach to reschedule new jobs in a one machine
  model.
\newblock {\em International Transactions in Operational Research},
  21(6):871--898.

\bibitem[Vidal et~al., 2015]{vidal_timing_2015}
Vidal, T., Crainic, T.~G., Gendreau, M., and Prins, C. (2015).
\newblock Timing problems and algorithms: {Time} decisions for sequences of
  activities.
\newblock {\em Networks}, 65(2):102--128.

\bibitem[Vieira et~al., 2003]{VHL:03}
Vieira, G., Herrmann, J., and Lin, E. (2003).
\newblock Rescheduling manufacturing systems: a framework of strategies,
  policies, and methods.
\newblock {\em Journal of Scheduling}, 6(1):39--62.

\bibitem[Wang et~al., 2018]{wang_2018}
Wang, D., Yin, Y., and Cheng, T. (2018).
\newblock Parallel-machine rescheduling with job unavailability and rejection.
\newblock {\em Omega}, 81:246--260.

\bibitem[Yang, 2007]{BY07}
Yang, B. (2007).
\newblock Single machine rescheduling with new jobs arrivals and processing
  time compression.
\newblock {\em International Journal of Advanced Manufacturing Technologies},
  34:378--384.

\bibitem[Yuan and Mu, 2007]{YM:07}
Yuan, J. and Mu, Y. (2007).
\newblock Rescheduling with release dates to minimize makespan under a limit on
  the maximum sequence disruption.
\newblock {\em European Journal of Operational Research}, 182(2):936--944.

\bibitem[Yuan et~al., 2007]{YMLL:07}
Yuan, J., Mu, Y., Lu, L., and Li, W. (2007).
\newblock Rescheduling with release dates to minimize total sequence disruption
  under a limit on the makespan.
\newblock {\em European Journal of Operational Research}, 24(6):789--796.

\bibitem[Zhang et~al., 2022]{zhang_rescheduling_2021}
Zhang, X., Lin, W.-C., and Wu, C.-C. (2022).
\newblock Rescheduling problems with allowing for the unexpected new jobs
  arrival.
\newblock {\em Journal of Combinatorial Optimization}, 43:630--645.

\bibitem[Zhao and Tang, 2010]{ZT10}
Zhao, C. and Tang, H. (2010).
\newblock Rescheduling problems with deteriorating jobs under disruptions.
\newblock {\em Applied Mathematical Modelling}, 34(1):238--243.

\bibitem[Zhao et~al., 2016]{ZLY16}
Zhao, Q., Lu, L., and Yuan, J. (2016).
\newblock Rescheduling with new orders and general maximum allowable time
  disruptions.
\newblock {\em 4OR}, 14:261--280.

\end{thebibliography}

\end{document}